\documentclass{amsart}
\usepackage{tikz,amsmath,amssymb,color,enumerate}

\theoremstyle{plain}
\newtheorem{theorem}{Theorem}
\newtheorem{lemma}[theorem]{Lemma}
\newtheorem{proposition}[theorem]{Proposition}
\newtheorem{corollary}[theorem]{Corollary}
\theoremstyle{definition}
\newtheorem{definition}[theorem]{Definition}

\newtheorem{remark}[theorem]{Remark}
\newtheorem{question}[theorem]{Question}

\DeclareMathOperator{\colim}{colim}
\renewcommand{\Im}{\ensuremath{\mathrm{Im}}}
\newcommand{\cat}[1]{\ensuremath{\mathbf{#1}}}
\DeclareMathOperator{\Sub}{Sub}
\DeclareMathOperator{\Max}{Max}
\DeclareMathOperator{\At}{At}
\DeclareMathOperator{\PBoolD}{Spec}
\DeclareMathOperator{\id}{id}
\DeclareMathOperator{\Id}{Id}
\DeclareMathOperator{\height}{ht}

\newcommand{\op}{\ensuremath{{}^{\text{op}}}}
\renewcommand{\restriction}{\mathord{\upharpoonright}}

\usetikzlibrary{arrows}
\newcommand*\arrowoffset{0.4pt}
\makeatletter
\pgfarrowsdeclare{new double arrowhead}{new double arrowhead}
{
  \pgfutil@tempdima=-0.84pt%
  \advance\pgfutil@tempdima by-1.3\pgflinewidth%
  \pgfutil@tempdimb=0.21pt%
  \advance\pgfutil@tempdimb by.625\pgflinewidth%
  \pgfarrowsleftextend{+\pgfutil@tempdima}
  \advance\pgfutil@tempdimb by \arrowoffset%
  \pgfarrowsrightextend{+\pgfutil@tempdimb}
}
{
  \pgfsetstrokecolor{black}
  \pgfsetlinewidth{0.6pt}
  \pgfutil@tempdima=0.28pt%
  \advance\pgfutil@tempdima by.3\pgflinewidth%
  \pgfsetlinewidth{0.8\pgflinewidth}
  \pgfsetdash{}{+0pt}
  \pgfsetroundcap
  \pgfsetroundjoin
  \pgfpathmoveto{\pgfpoint{-3\pgfutil@tempdima+\arrowoffset}{4\pgfutil@tempdima}}
  \pgfpathcurveto
{\pgfpoint{-2.75\pgfutil@tempdima+\arrowoffset}{2.5\pgfutil@tempdima}}
                {\pgfpoint{\arrowoffset}{0.25\pgfutil@tempdima}}
                {\pgfpoint{0.75\pgfutil@tempdima+\arrowoffset}{0pt}}
  \pgfpathcurveto
                {\pgfpoint{\arrowoffset}{-0.25\pgfutil@tempdima}}
{\pgfpoint{-2.75\pgfutil@tempdima+\arrowoffset}{-2.5\pgfutil@tempdima}}
{\pgfpoint{-3\pgfutil@tempdima+\arrowoffset}{-4\pgfutil@tempdima}}
  \pgfusepathqstroke
}
\makeatother

\begin{document}
% \title{Domain theory in quantum logic}
\title{Piecewise Boolean algebras and their domains}
\author{Chris Heunen}
\address{University of Oxford, Department of Computer Science}
\email{heunen@cs.ox.ac.uk}
\date{\today}
\maketitle
\begin{abstract}
  % We bring domain theory and quantum logic together
  % by characterising spectral posets, that is, those domains that arise as posets of Boolean subalgebras of a piecewise Boolean algebra. 
  We characterise piecewise Boolean domains, that is, those domains that arise as Boolean subalgebras of a piecewise Boolean algebra.
  % This leads to equivalent descriptions of the category of piecewise Boolean algebras: either as spectral posets equipped with an orientation, or as full structure sheaves on spectral posets.   
  This leads to equivalent descriptions of the category of piecewise Boolean algebras: either as piecewise Boolean domains equipped with an orientation, or as full structure sheaves on piecewise Boolean domains. 
\end{abstract}

\section{Introduction}

Boolean algebras embody the logical calculus of observations. 
But in many applications it does not make sense to consider any two observations simultaneously.
For a simple example, can you really verify that ``there is a polar bear in the Arctic'' and ``there is a penguin in Antarctica'', when you cannot be in both places at once?
% ``the sun is yellow'' and ``the moon is white'' when you cannot see both at the same time?
This leads to the notion of a \emph{piecewise Boolean algebra}\footnote{N{\'e}e \emph{partial} Boolean algebra; recent authors use \emph{piecewise} to avoid `partial complete Boolean algebra'~\cite{vdbergheunen:colim}. Incidentally, this is the structure Boole originally studied~\cite{hailperin:boole}.}, which is roughly a Boolean algebra where only certain pairs of elements have a conjunction.

You could say that the issue in the above example is merely caused by a constructive interpretation. But it is a real, practical concern in \emph{quantum logic}, where the laws of nature forbid jointly observing certain pairs (the famous example being to measure position and momentum),
and piecewise Boolean algebras consequently play a starring role~\cite{hughes:omnibus,hughes:semantic,
finch:structure,gudder:partial,kalmbach:orthomodularlattices}. 
% Indeed, piecewise Boolean algebras were used to formulate a famous no-go theorem in the foundations of quantum mechanics~\cite{kochenspecker:hiddenvariables}. 

Another cause of incompatible observations relates to \emph{partiality}. Some (observations of) computations might not yet have returned a result, but nevertheless already give some partial information. It might not make sense to compare two partial observations, whereas the completed observations would be perfectly compatible. Partiality is also at play in quantum theory, where measurements can be fine-grained, so that the course-grained version only gives partial information.
This leads to \emph{domain theory}~\cite{abramskyjung:domaintheory,jung:domains}.

This paper brings the two topics, domain theory and quantum logic, together. The main construction sends a piecewise Boolean algebra $P$ to the collection $\Sub(P)$ of its compatible parts, \textit{i.e.}\ its Boolean subalgebras. This well-known construction~\cite{vdbergheunen:colim,hughes:omnibus,finch:structure,gudder:partial,heunenlandsmanspitters:topos,doeringbarbosa:domains,hardingnavara:subalgebras,graetzerkohmakkai:booleansubalgebras,sachs:booleansubalgebras} assigns a domain $\Sub(P)$ to a piecewise Boolean algebrac $P$. Our main result is a characterisation of the domains of the form $\Sub(P)$, called \emph{piecewise Boolean domains}; it turns out they are the so-called algebraic L-domains whose bottom two rungs satisfy some extra properties. 
This gives an alternative description of piecewise Boolean algebras, that is more concise, amenable to domain theoretic techniques, and addresses open questions~\cite[Problems~1 and~2]{hardingnavara:subalgebras}. 
Colloquially, it shows that to reconstruct the whole, it suffices to know how the parts fit together, without having to know the internal structure of the parts.

Commutative rings, such as Boolean algebras, can be reconstructed from their Zariski spectrum together with the structure sheaf over that spectrum~\cite[V.3]{johnstone:stonespaces}. Analogously, we prove that a piecewise Boolean algebra can be reconstructed from its piecewise Boolean domain together with the structure sheaf over that domain. (Equivalently, we could use the Stone dual of the structure sheaf.) We prove a categorical equivalence between piecewise Boolean algebras, and piecewise Boolean domains with a subobject-preserving functor valued in Boolean algebras. We call the latter objects \emph{piecewise Boolean diagrams}. 

There is a beautiful microcosm principle at play in the reconstruction of a piecewise Boolean diagram from a piecewise Boolean domain: piecewise Boolean diagrams are really structure-preserving functors from a piecewise Boolean domain into the category of Boolean algebras. The piecewise Boolean diagram is almost completely determined by the piecewise Boolean domain, but some choices have to be made. We condense those choices into an \emph{orientation}, that fixes a choice between two possibilities on each atom of a piecewise Boolean domain. Finally, we prove that the category of piecewise Boolean algebras is equivalent to the category of oriented piecewise Boolean domains.

We proceed as follows. Section~\ref{sec:pbool} recalls the basics of piecewise Boolean algebras, after which Section~\ref{sec:spectralposets} introduces piecewise Boolean domains and proves they are precisely those domains of the form $\Sub(P)$. This characterisation is simplified further in Section~\ref{sec:partitionlattices}. Section~\ref{sec:spectraldiagrams} proves the equivalence between piecewise Boolean algebras and piecewise Boolean diagrams, and Section~\ref{sec:orientation} reduces from piecewise Boolean diagrams to oriented piecewise Boolean domains. Finally, Section~\ref{sec:future} concludes with directions for future work. For example, it would be interesting to explore connections to other work~\cite{laird:locallyboolean,abramskyvickers:quantales,kozenetal:stonemarkov}.

\section{Piecewise Boolean algebras}\label{sec:pbool}

\begin{definition}
  A \emph{piecewise Boolean algebra} consists of a set $P$ with:
  \begin{itemize}
    \item a reflexive and symmetric binary (\emph{commeasurability}) relation $\odot \subseteq P \times P$;
    \item elements $0,1 \in P$;
    \item a (total) unary operation $\neg \colon P \to P$;
    \item (partial) binary operations $\wedge,\vee \colon \odot \to P$;
  \end{itemize}
  such that every set $A \subseteq P$ of pairwise commeasurable elements is contained in a set $B \subseteq P$, whose elements are also pairwise commeasurable, and on which the above operations determine a Boolean algebra structure.

  A morphism of piecewise Boolean algebras is a function that preserves commeasurability and all the algebraic structure, whenever defined. Piecewise Boolean algebras and their morphisms form a category $\cat{PBool}$.
\end{definition}

A piecewise Boolean algebra in which every two elements are commeasurable is just a Boolean algebra. Given a piecewise Boolean algebra $P$, we write $\Sub(P)$ for the collection of its commeasurable subalgebras, ordered by inclusion. (The maximal elements of $\Sub(P)$ are also called \emph{blocks}, see~\cite[Section~1.4]{kalmbach:orthomodularlattices}.)
In fact, $\Sub$ is a functor $\cat{PBool} \to \cat{Poset}$ to the category of partially ordered sets and monotone functions, acting on morphisms by direct image. If $P$ is a piecewise Boolean algebra, $\Sub(P)$ is called its \emph{piecewise Boolean domain}.

We now list two main results about piecewise Boolean algebras and their domains. 
First, we can reconstruct $P$ from $\Sub(P)$ up to isomorphism.

\begin{theorem}[\cite{vdbergheunen:colim}]\label{thm:colim}
  Any piecewise Boolean algebra $P$ is a colimit of $\Sub(P)$.
\end{theorem}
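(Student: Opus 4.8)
The plan is to exhibit $P$, together with the family of inclusion morphisms $S \hookrightarrow P$ for $S \in \Sub(P)$, as the colimit of the evident diagram $D \colon \Sub(P) \to \cat{PBool}$ that sends each commeasurable subalgebra $S$ to $S$ regarded as a (piecewise) Boolean algebra and sends each inclusion $S \subseteq S'$ to the corresponding inclusion morphism. Everything then reduces to verifying the universal property directly; we never need to know abstractly that $\cat{PBool}$ has the colimit in question.

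First I would record two consequences of the defining axiom. Applied to the singleton $\{x\}$, which is pairwise commeasurable by reflexivity of $\odot$, it shows that every $x \in P$ lies in some $S \in \Sub(P)$, so the inclusions are jointly surjective. Applied to a pair $\{x,y\}$ with $x \odot y$, it shows that $x$ and $y$ lie in a common $S \in \Sub(P)$, and since the operations of $S$ are by definition the restrictions of those of $P$, the elements $x \wedge y$ and $x \vee y$ are computed the same way in $S$ as in $P$. Together these say precisely that the inclusions form a cocone on $D$ with apex $P$, and that this cocone captures all of the partial algebraic structure of $P$.

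For the universal property, let $(f_S \colon S \to Q)_{S \in \Sub(P)}$ be any cocone on $D$. Define $f \colon P \to Q$ by $f(x) = f_S(x)$ for any $S \in \Sub(P)$ with $x \in S$. This is well defined: if $x \in S \cap S'$, then $S \cap S'$ is again a commeasurable subalgebra (its elements are pairwise commeasurable and the operations restrict), and it sits below both $S$ and $S'$, so cocone compatibility gives $f_S(x) = f_{S \cap S'}(x) = f_{S'}(x)$. One then checks that $f$ is a morphism of piecewise Boolean algebras: it preserves $0$, $1$, and $\neg$ because a single $S$ contains all the relevant elements and $f_S$ preserves them; and if $x \odot y$ in $P$ then $x,y$ lie in a common $S \in \Sub(P)$, on which $f_S$ is a morphism of piecewise Boolean algebras out of a Boolean algebra, whence $f(x) \odot f(y)$ in $Q$ and $f$ commutes with $\wedge$ and $\vee$ on the pair $(x,y)$. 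Finally, $f$ is the unique mediating morphism, since any such morphism must restrict to $f_S$ on each $S$, and the $S \in \Sub(P)$ cover $P$.

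The one genuine subtlety — and the step I expect to demand the most care — is the handling of commeasurability in both the cocone axiom and the verification that $f$ is a morphism: one must use that the relation $\odot$ on $P$ is witnessed \emph{locally} (every commeasurable pair sits inside a single Boolean subalgebra) rather than globally, which is exactly what the defining axiom provides. Note also that $\Sub(P)$ is in general far from directed, so this is not a filtered colimit; the argument nevertheless goes through verbatim precisely because we check the universal property by hand rather than appealing to any description of filtered colimits in $\cat{PBool}$.
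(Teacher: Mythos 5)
Your proof is correct: exhibiting the inclusions $S \hookrightarrow P$ as a cocone and verifying the universal property directly, with well-definedness of the mediating map secured by the fact that $S \cap S'$ is again a commeasurable subalgebra, is a complete argument. The paper itself only cites this theorem from~\cite{vdbergheunen:colim} without reproducing a proof, and your direct verification is essentially the argument given in that reference, so there is nothing to add.
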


Boolean algebras are precisely objects of the ind-completion of the category of finite Boolean algebras~\cite[VI.2.3]{johnstone:stonespaces}, defining Boolean algebras as colimits of diagrams of finite Boolean algebras. The previous theorem extends this to piecewise Boolean algebras.
Second, $\Sub(P)$ determines $P$ up to isomorphism.

\begin{theorem}[\cite{hardingnavara:subalgebras}]\label{thm:hardingnavara}
  If $P$ and $P'$ are piecewise Boolean algebras and $\varphi \colon \Sub(P) \to \Sub(P')$ is an isomorphism, then there is an isomorphism $f \colon P \to P'$ with $\varphi = \Sub(f)$.
  Moreover, $f$ is unique iff atoms of $\Sub(P)$ are not maximal.
\end{theorem}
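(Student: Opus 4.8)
The plan is to recover the atoms of $\Sub(P)$ from the order, reduce the existence of $f$ to gluing together isomorphisms of the subalgebras forming the diagram $\Sub(P)$ --- which Theorem~\ref{thm:colim} together with the classical reconstruction of finite Boolean algebras from their subalgebra lattices makes possible --- and then settle uniqueness by analysing the automorphisms of $P$ that induce the identity on $\Sub(P)$.

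\emph{Atoms.} For $x\in P$ with $x\neq 0,1$ the set $\{0,x,\neg x,1\}$ is a Boolean subalgebra (it lies inside any block through $x$) and is the least subalgebra containing $x$; so the atoms of $\Sub(P)$ are exactly the four-element subalgebras, and they correspond to the two-element sets $\{x,\neg x\}$. Since $\varphi$ is a poset isomorphism it carries atoms to atoms, so on each atom $a$ it restricts to one of the two bijections between the nontrivial elements of $a$ and those of $\varphi(a)$. The content of the existence claim is that these local choices can be made to cohere into a piecewise Boolean isomorphism, and the content of the ``moreover'' is to measure the leftover freedom.

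\emph{Existence.} I would build a compatible family of Boolean isomorphisms $f_B\colon B\to\varphi(B)$, one for each $B\in\Sub(P)$, such that $f_B$ induces the restriction of $\varphi$ to the subalgebras below $B$ and $f_B\restriction C=f_C$ whenever $C\le B$. Given such a family, Theorem~\ref{thm:colim} presents $P$ as the colimit of the inclusion diagram $\Sub(P)\to\cat{PBool}$, so the cocone $B\xrightarrow{f_B}\varphi(B)\hookrightarrow P'$ yields $f\colon P\to P'$ with $\Sub(f)=\varphi$, and the symmetric family for $\varphi^{-1}$ gives a two-sided inverse; hence $f$ is an isomorphism. (Concretely, every element lies in the finite subalgebra $\{0,x,\neg x,1\}$, so a compatible family on finite subalgebras already assembles pointwise into a piecewise Boolean isomorphism.) It therefore suffices to produce the $f_B$ for finite $B$, by induction on the number of atoms of $B$: for a single atom $a$ of $\Sub(P)$, $\varphi(a)$ is automatically four-element and either bijection of the nontrivial pairs will do; for larger $B$, $\varphi$ restricts to an isomorphism $\Sub(B)\cong\Sub(\varphi(B))$ of finite subalgebra lattices, which by Sachs and by Gr\"atzer--Koh--Makkai~\cite{sachs:booleansubalgebras,graetzerkohmakkai:booleansubalgebras} is induced by a Boolean isomorphism $B\cong\varphi(B)$, and one must check this isomorphism can be chosen to restrict to the $f_C$ already built for $C<B$. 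This compatibility is the crux and where I expect the real work: the isomorphism the reconstruction theorem supplies for one subalgebra need not agree on overlaps with the one it supplies for another. The lever is that for $|B|>4$ the Boolean isomorphism $B\to\varphi(B)$ inducing a given lattice isomorphism is \emph{unique} --- the only Boolean automorphism of $B$ fixing every subalgebra is the identity once $|B|>4$, since it must send each $b$ to $b$ or $\neg b$ and on any finite subalgebra it permutes the atoms while fixing all their partitions, forcing it to be trivial --- so any potential disagreement is confined to four-element subalgebras, and one must argue that the choices made there can be threaded coherently through the larger finite subalgebras above them.

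\emph{Uniqueness.} Two isomorphisms $f,g\colon P\to P'$ with $\Sub(f)=\Sub(g)=\varphi$ differ by an automorphism $h:=g^{-1}f$ of $P$ with $\Sub(h)=\id$, so it suffices to decide when the only such $h$ is $\id_P$. Any such $h$ fixes $0,1$ and fixes each $\{0,x,\neg x,1\}$ setwise, hence $h(x)\in\{x,\neg x\}$ for every $x$. Suppose $h(x)=\neg x$ for some $x\neq 0,1$ with $\{0,x,\neg x,1\}$ not maximal; then $x$ is commeasurable with some $y\notin\{0,x,\neg x,1\}$, and the subalgebra $B$ generated by $x$ and $y$ is finite and properly contains $\{0,x,\neg x,1\}$, so $|B|>4$. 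As $\Sub(h)=\id$, $h$ maps $B$ onto itself, so $h\restriction B$ is a Boolean automorphism of $B$ inducing the identity on $\Sub(B)$, whence $h\restriction B=\id$ by the remark above --- contradicting $h(x)=\neg x$. So if no atom of $\Sub(P)$ is maximal then $h=\id$ and $f$ is unique. Conversely, if the atom $a=\{0,x,\neg x,1\}$ is maximal then $x$ is commeasurable only with the elements of $a$, so the transposition exchanging $x$ and $\neg x$ and fixing everything else is a well-defined piecewise Boolean automorphism $\tau\neq\id$ with $\Sub(\tau)=\id$; then $f$ and $f\circ\tau$ are distinct isomorphisms with $\Sub(f)=\Sub(f\circ\tau)=\varphi$, so $f$ is not unique.
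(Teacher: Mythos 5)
The paper does not actually prove this theorem---it imports it from Harding and Navara---so there is no internal proof to compare against; I can only assess your argument on its own terms. Your uniqueness half is complete and correct: the reduction to automorphisms $h$ with $\Sub(h)=\id$, the observation that any such $h$ satisfies $h(x)\in\{x,\neg x\}$, the rigidity of a finite Boolean algebra with more than four elements under automorphisms fixing every subalgebra setwise (an atom cannot go to its complement once there are at least three atoms), and the explicit transposition $\tau$ witnessing non-uniqueness when some atom of $\Sub(P)$ is maximal are all sound.

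The existence half, however, has a genuine gap, and you have located it yourself but not closed it: the coherence of the forced isomorphisms $f_B$ across four-element overlaps. For $|B|>4$ the isomorphism $f_B$ inducing $\varphi\restriction\Sub(B)$ is unique, so any two subalgebras meeting in something of size greater than four automatically agree; but two blocks $B$ and $B'$ can meet in exactly a four-element subalgebra $a=\{0,x,\neg x,1\}$ with no common upper bound in $\Sub(P)$, and then nothing in your argument prevents $f_B(x)=x'$ while $f_{B'}(x)=\neg x'$. Observe, for instance, that when $B$ and $B'$ both cover $a$, exactly one of $x,\neg x$ is an atom of $B$ and exactly one is an atom of $B'$, and the abstract poset $\Sub(P)$ does not record whether it is the \emph{same} element in both cases---yet $f_B$ and $f_{B'}$ must each send the locally atomic element to the locally atomic element. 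Showing that these local constraints are globally consistent (or isolating the hypotheses under which they are) is not a routine ``threading'' verification; it is the substantive content of the theorem, and it is exactly what Harding and Navara's machinery of \emph{directions}---and, in this paper, the orientation data of Section~\ref{sec:orientation} together with the diagram~\eqref{eq:orientationchoice} argument in Lemma~\ref{lem:functor}---is designed to control. Until that step is supplied, the existence claim is unproven.
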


However, the functor $\Sub$ is not an equivalence. It is not faithful: see the above theorem. Neither is it full: not every monotone function $\Sub(P) \to \Sub(P')$ preserves atoms.
Nevertheless, the previous two theorems show that the functor $\Sub$ is almost an equivalence.
Later, we will upgrade the functor $\Sub$ to an equivalence. But first we investigate posets of the form $\Sub(P)$.

\section{Piecewise Boolean domains}\label{sec:spectralposets}

This section characterises piecewise Boolean domains in terms of finite partition lattices, which we will characterise further in the next section.
Recall that an element $x$ of a poset $P$ is \emph{compact} when, if $x \leq \bigvee D$ for a directed subset $D \subseteq P$ with a supremum, then $x \leq y$ for some $y \in D$.
Write $K(P)$ for the partially ordered set of compact elements of $P$.

\begin{definition}\label{def:spectralposet}
  A poset is called a \emph{piecewise Boolean domain} when:
  \begin{enumerate}
    \item[(1)] it has directed suprema;
    \item[(2)] it has nonempty infima;
    \item[(3)] each element is the directed supremum of compact ones;
    \item[(4)] the downset of each compact element is dual to a finite partition lattice.
  \end{enumerate}
\end{definition}

Posets satisfying properties (1)--(3) are also known as \emph{Scott domains}~\cite{scott:denotational}.

\begin{proposition}
  If $P$ is a piecewise Boolean algebra, $\Sub(P)$ is a piecewise Boolean domain.
\end{proposition}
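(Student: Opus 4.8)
The plan is to verify the four defining conditions of Definition~\ref{def:spectralposet} in turn. Two facts get used repeatedly: a directed union of commeasurable subalgebras is again a commeasurable subalgebra, a nonempty intersection of commeasurable subalgebras is again a commeasurable subalgebra, and the compact elements of $\Sub(P)$ are exactly the \emph{finite} commeasurable subalgebras. Throughout, the work is to check that the relevant unions and intersections are closed under the (partial) operations, and here the defining property of a piecewise Boolean algebra---that every pairwise commeasurable set sits inside a genuine Boolean algebra---is precisely what makes the partial operations behave.

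For condition~(1), let $D \subseteq \Sub(P)$ be directed and put $U = \bigcup D$. Any two elements of $U$ lie in a common member of $D$ by directedness, hence are commeasurable; likewise $U$ is closed under $\neg$ and, where defined, under $\wedge,\vee$, since any finitely many elements of $U$ lie in a common member of $D$. Being a set of pairwise commeasurable elements, $U$ sits inside some Boolean subalgebra, and being closed under the operations and containing $0,1$ it is itself a Boolean subalgebra; it is plainly the least upper bound of $D$. For condition~(2), the intersection of a nonempty family of commeasurable subalgebras contains $0$ and $1$, is closed under all operations, and has pairwise commeasurable elements (they lie in any one member), so it is a commeasurable subalgebra, and evidently the greatest lower bound. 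The empty infimum would be $P$ itself, which need not be commeasurable, so only nonempty infima are claimed.

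For conditions~(3) and~(4) I first pin down the compact elements. A finite subalgebra $B$ is compact: if $B \subseteq \bigcup D$ with $D$ directed, the finitely many elements of $B$ all lie in a single member of $D$ by directedness, which therefore contains $B$. Conversely, every $C \in \Sub(P)$ is the union of its finite subalgebras---each $a \in C$ lies in $\{0,1,a,\neg a\}$, a finite subalgebra since $a \odot \neg a$ (both lie in a common block)---and these form a directed family (the subalgebra generated by two finite subalgebras is again finite), so a compact $C$ equals one of them and is finite. Condition~(3) is then immediate. For condition~(4), the downset of a compact element $B$ is the lattice $\Sub(B)$ of all Boolean subalgebras of the finite Boolean algebra $B$; writing $B \cong \mathcal{P}(X)$ with $X$ the finite set of atoms of $B$, a Boolean subalgebra of $\mathcal{P}(X)$ is exactly the one generated by a partition of $X$, and one such subalgebra is contained in another precisely when the first partition is coarser than the second. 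Hence $C \mapsto (\text{the partition of } X \text{ given by the atoms of } C)$ is an order-reversing bijection from $\Sub(B)$ onto the partition lattice of $X$, so $\downarrow B$ is dual to a finite partition lattice.

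None of these steps is deep. The only points requiring genuine care are the identification of the compact elements and, in condition~(4), the dictionary between Boolean subalgebras of $\mathcal{P}(X)$ and partitions of $X$ together with the order reversal; the remaining verifications are routine, modulo the standard facts that a finite Boolean algebra is the power set of its atoms and that a finitely generated Boolean algebra is finite.
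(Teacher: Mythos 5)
Your proof is correct and follows essentially the same route as the paper: directed unions give suprema, nonempty intersections give infima, compact elements are identified with the finite commeasurable subalgebras, and the downset of a compact element is recognised as the subalgebra lattice of a finite Boolean algebra, dual to a partition lattice. The only difference is that you spell out details the paper leaves as ``easily verified'' or delegates to citations (Gr\"atzer--Koh--Makkai, Sachs), in particular the dictionary between subalgebras of $\mathcal{P}(X)$ and partitions of $X$.
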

\begin{proof}
  If $B_i \in \Sub(P)$, then also $\bigcap B_i \in \Sub(P)$, giving nonempty infima.
  If $\{B_i\}$ is a directed family of elements of $\Sub(P)$, then $\bigcup B_i$ is a Boolean algebra, which is the supremum in $\Sub(P)$. 
  To show that every element is the directed supremum of compact ones, it therefore suffices to show that the compact elements are the finite Boolean subalgebras of $P$. But this is easily verified.
  Finally, the downset of any compact element is pairwise commeasurable, hence a finite Boolean algebra, and it is dual to a finite partition lattice.~\cite{graetzerkohmakkai:booleansubalgebras,sachs:booleansubalgebras}.
\end{proof}

We now set out to prove that any piecewise Boolean domain $L$ is of the form $\Sub(P)$ for some piecewise Boolean algebra $P$. The first step is to show $L$ gives rise to a functor $L \to \cat{Bool}$ that preserves the structure of $L$. 
% In particular, this functor will preserve \emph{subobjects}. 
For $x \in L$, we write $\Sub(x)$ for the principal ideal of $x$. 

\begin{remark}
  Both occurrences of $\Sub$ are instances of a more general scheme. If $\cat{C}$ is a category with epi-mono factorizations, we write $\Sub \colon \cat{C} \to \cat{Poset}$ for the covariant subobject functor. It acts as direct image on morphisms $f \colon x \to y$, that is, a subobject $m \colon \bullet \rightarrowtail x$ gets mapped to the image $f[m] \colon \Im(f \circ m) \rightarrowtail y$. If $\cat{C}$ is a poset, then $\Sub(x)$ is just the principal ideal of $x$, and functoriality just means that $\Sub(x) \subseteq \Sub(y)$ when $x \leq y$. If $\cat{C}=\cat{Bool}$, then $\Sub(B)$ is the lattice of Boolean subalgebras of $B$, and the direct image $f[A]$ of a Boolean subalgebra $A$ under a homomorphism $f \colon B \to B'$ is a Boolean subalgebra of $B'$. By slight abuse of notation, if $\cat{C}$ is the category $\cat{PBool}$, we let $\Sub(P)$ be the poset of Boolean subalgebras of $P$ (instead of piecewise Boolean subalgebras), as before. The action on morphisms by direct image is then still well-defined. 
\end{remark}

% By Zorn's Lemma, it follows from (1) that every $x \in L$ is below some maximal $y \in L$.

% By property~(4), every compact element is the infimum of all maximal elements above it.

\begin{lemma}\label{lem:functor}
  Let $L$ be a piecewise Boolean domain.
  \begin{enumerate}[(a)]
    \item For each $x \in L$ there is a Boolean algebra $F(x)$ with $\Sub(F(x)) \cong \Sub(x)$.
    \item There is a functor $F \colon L \to \cat{Bool}$ and a natural isomorphism $\Sub \circ F \cong \Sub$. %Hence each morphism $F(x \leq y)$ is injective.
    % \item The functor $F$ preserves directed colimits.
  \end{enumerate}
\end{lemma}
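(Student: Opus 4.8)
The plan is to build $F$ in two stages: first construct the Boolean algebra $F(x)$ for each compact $x \in L$, then extend along directed suprema. For part (a), note that when $x$ is compact, $\Sub(x)$ is (by property (4)) dual to a finite partition lattice $\Pi_n$, and it is classical \cite{graetzerkohmakkai:booleansubalgebras,sachs:booleansubalgebras} that $\Pi_n^{\op}$ is precisely the lattice of Boolean subalgebras of the finite Boolean algebra $2^n$; so set $F(x) = 2^n$. For a general $x \in L$, write $x = \bigvee_i x_i$ as a directed supremum of compact elements (property (3)); the principal ideals satisfy $\Sub(x) = \bigcup_i \Sub(x_i)$, and one checks this union is directed in $\cat{Bool}$ under the maps $F(x_i) \to F(x_j)$ built below, so we can define $F(x) = \colim_i F(x_i)$ and verify $\Sub$ preserves this filtered colimit to get $\Sub(F(x)) \cong \Sub(x)$.

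For part (b), the functoriality is the substantive point, and it splits according to the kind of morphism in $L$. The key case is a covering relation $y \lessdot x$ between compact elements: here $\Sub(y) \hookrightarrow \Sub(x)$ is an inclusion of (duals of) finite partition lattices coming from a single atom merge, and I would argue that such an inclusion is induced by a (non-unique!) Boolean homomorphism $F(y) \to F(x)$ — concretely, identifying $F(x) = 2^n$ and $F(y) = 2^{n-1}$, the inclusion $\Pi_{n-1}^{\op} \hookrightarrow \Pi_n^{\op}$ corresponds to a quotient map $2^n \twoheadrightarrow 2^{n-1}$ collapsing two atoms, and conversely every such quotient realises the inclusion. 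General order relations $y \le x$ among compacts factor into finitely many covers, so we get a map $F(y) \to F(x)$; the compatibility needed for a genuine functor (independence of the chosen factorisation, composition) must be checked, and this is where choices have to be coherently pinned down. Finally, for non-compact $x$, morphisms are handled by the universal property of the filtered colimits defining $F(x)$.

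The main obstacle is exactly the coherence of these choices: at the level of subobject lattices the inclusion $\Sub(y) \subseteq \Sub(x)$ does not determine the Boolean homomorphism $F(y) \to F(x)$ uniquely (this non-uniqueness is precisely the phenomenon the paper later packages as an \emph{orientation}), so one cannot simply read $F$ off from $L$. I expect the fix is to make the choices on the atoms of $L$ first — each atom $a$ has $\Sub(a) \cong \Sub(2^2)$ with $F(a) = 2^2$, and the two homomorphisms $2 \to 2^2$ from the bottom are the only data to fix — and then show every higher-level comparison map is forced once the atom-level maps are chosen, using that a finite Boolean algebra is generated by its atoms together with the structure visible in the bottom rungs of $\Sub$. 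Verifying that these locally-forced maps assemble into an actual functor (associativity over chains of covers) is the routine-but-delicate bookkeeping that the formal proof will need to carry out, likely by induction on the height of compact elements.
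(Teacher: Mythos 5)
Your overall architecture --- read $F$ off the finite partition lattices on compact elements, locate the residual choices at the atoms --- matches the paper's, but both halves have genuine gaps. In part (a), the reduction of non-compact $x$ to a filtered colimit does not work as stated: the principal ideal $\Sub(x)$ is \emph{not} $\bigcup_i \Sub(x_i)$, since $x$ itself (and every non-compact $y\leq x$) lies in $\Sub(x)$ but in no $\Sub(x_i)$; correspondingly $\Sub\colon\cat{Bool}\to\cat{Poset}$ does not preserve filtered colimits computed as unions, because $\colim_i F(x_i)$ is a subalgebra of itself contained in no $F(x_i)$. What is true is that both sides are algebraic lattices with the same compact elements, but making that precise is exactly the content of the Gr\"atzer--Koh--Makkai representation theorem, which the paper invokes directly: properties (1)--(3) make $L$ an algebraic L-domain, so \emph{every} downset $\Sub(x)$ is an algebraic lattice, and (4) forces its compacts to be dual to finite partition lattices, which by \cite{graetzerkohmakkai:booleansubalgebras} characterises the lattices of the form $\Sub(B)$. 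No colimit construction, and no entangling of (a) with the connecting maps of (b), is needed.

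In part (b) the step you defer as ``routine-but-delicate bookkeeping'' is the crux, and the tool that discharges it is absent from your outline: the uniqueness half of \cite{graetzerkohmakkai:booleansubalgebras,sachs:booleansubalgebras}, namely that an isomorphism $\Sub(B)\to\Sub(B')$ lifts to a \emph{unique} isomorphism $B\to B'$ whenever $|B|\neq 4$. With it, $F(x\leq y)$ is defined at once for every pair $x\leq y$ (factor $\varphi_{x,y}$ as an isomorphism onto a downset of $\Sub(F(y))$ followed by an inclusion, then lift), with no factoring through chains of covers, and functoriality is automatic because $F(y\leq z)\circ F(x\leq y)$ and $F(x\leq z)$ both induce $\varphi_{x,z}$; without citing that uniqueness your induction on height never closes. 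Relatedly, your claim that the lift of a cover inclusion is non-unique in general is false --- it is unique except out of a four-element algebra, i.e.\ except out of atoms --- and your description of it as a quotient $2^n\twoheadrightarrow 2^{n-1}$ has the variance backwards (it is an injection $2^{n-1}\hookrightarrow 2^n$, whose Stone dual collapses two atoms). Finally, you mislocate the residual choice: there is exactly \emph{one} homomorphism $2\to 2^2$, so nothing is ``fixed from the bottom''; the choice is between the two homomorphisms $F(a)\to F(y)$ for $y$ covering an atom $a$ (differing by the swap of the two atoms of $F(a)$), and one must still verify that choosing this map for a single cover $y$ consistently determines $F(a<y')$ for every other $y'>a$ --- the paper does this by chasing the commuting square over $z=y\vee y'$, a point your proposal does not address.
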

\begin{proof}
  Properties (1) and (2) make $L$ into an L-domain~\cite[Theorem~2.9]{jung:domains}.
  Adding property (3) makes $L$ into an algebraic L-domain~\cite[Section~2.2]{jung:domains}.
  It follows that every downset is an algebraic lattice~\cite[Corollary~1.7 and Proposition~2.8]{jung:domains}, and in fact that $\bigcup_x K(\Sub(x)) = K(L)$~\cite[Proposition~1.6]{jung:domains}. Finally, property (4) ensures that every downset satisfies the following property: it is an algebraic lattice, and each compact element in it is dual to a finite partition lattice. Therefore every downset is the lattice of Boolean subalgebras of some Boolean algebra~\cite{graetzerkohmakkai:booleansubalgebras}, establishing (a).

  Towards (b), define $\varphi_{x,y}$ for $x \leq y \in L$ as the following composition.
  \[\begin{tikzpicture}[xscale=2.5, yscale=1.33]
    \node (tl) at (0,1) {$\Sub(x)$};
    \node (bl) at (0,0) {$\Sub(y)$};
    \node (tr) at (1,1) {$\Sub(F(x))$};
    \node (br) at (1,0) {$\Sub(F(y))$};
    \draw[->] (tr) to node[above] {$\cong$} (tl);
    \draw[->] (bl) to node[below] {$\cong$} (br);
    \draw[->] (tl) to node[left] {$\Sub(x \leq y)$} (bl);
    \draw[->] (tr) to node[right] {$\varphi_{x,y}$} (br);
  \end{tikzpicture}\]
  Because $\Sub(x \leq y)$ is a monomorphism of complete lattices~\cite[Proposition~2.8]{jung:domains}, so is $\varphi_{x,y}$. Now, $\Sub(\varphi_{x,y})(\Sub(F(x))) \in \Sub(\Sub(F(y)))$; that is, the direct image of $\varphi_{x,y}$ is downward closed in $\Sub(F(y))$. So, by construction, the direct image of $\varphi_{x,y}$ is $\Sub(B)$, where $B=\varphi_{x,y}(F(x))$. Hence $\varphi_{x,y}$ factors as an isomorphism $\psi \colon \Sub(F(x)) \to \Sub(B)$ followed by an inclusion $\Sub(B) \subseteq \Sub(F(y))$. By~\cite[Theorem~4]{graetzerkohmakkai:booleansubalgebras} or~\cite[Corollary~2]{sachs:booleansubalgebras}, there is an isomorphism $f \colon F(x) \to B$ such that $\psi=\Sub(f)$. Also, $B \in \Sub(B) \subseteq \Sub(F(y))$, so $B$ is a Boolean subalgebra of $F(y)$. That is, there is an inclusion $g \colon B \hookrightarrow F(y)$ such that $\Sub(g)$ is the inclusion $\Sub(B) \subseteq \Sub(F(y))$.
  Thus $F(x \leq y) := g \circ f \colon F(x) \rightarrowtail F(y)$ is a monomorphism of Boolean algebras that satisfies $\Sub(F(x \leq y)) = \varphi_{x,y}$.
  If $|F(x)| \neq 4$, then $F(x \leq y)$ is in fact the unique such map~\cite[Lemma~5]{graetzerkohmakkai:booleansubalgebras}, and in this case it follows that $F(y \leq z) \circ F(x \leq y) = F(x \leq z)$.

  Next, we will adjust $F(x \leq y)$ for $|F(x)|=4$ if need be, to ensure functoriality of $F$.
  Let $x$ be an atom of $L$. If $x$ is maximal, there is nothing to do. Otherwise choose $y$ covering $x$. Select one of the two possible $F(x<y)$ inducing $\varphi_{x,y}$. Now, for any $y'>x$ such that $z=y \vee y'$ exists we need to choose $F(x<y')$ making the following diagram commute.
  \begin{equation}\label{eq:orientationchoice}\tag{$*$}
  \begin{aligned}\begin{tikzpicture}[xscale=3,yscale=1.33]
    \node (tl) at (0,1) {$F(x)$};
    \node (bl) at (0,0) {$F(y')$};
    \node (tr) at (1,1) {$F(y)$};
    \node (br) at (1,0) {$F(z)$};
    \draw[->] (tl) to node[above] {$F(x<y)$} (tr);
    \draw[->] (tl) to node[left] {$F(x<y')$} (bl);
    \draw[->] (tr) to node[right] {$F(y<z)$} (br);
    \draw[->] (bl) to node[below] {$F(y'<z)$} (br);
  \end{tikzpicture}\end{aligned}\end{equation}
  Let us write $\alpha_z$ for the isomorphism $\Sub(F(z)) \to \Sub(z)$. Next, notice that $X:=F(y<z) \circ F(x<y)[F(x)] = \varphi_{x,z}(F(x)) = \alpha_z(x) \subseteq F(z)$, and similarly $Y:=F(y'<z)[F(y')]=\varphi_{y',z}(F(y'))=\alpha_z(y') \subseteq F(z)$; because $x<y'$ hence $X \subseteq Y$, and there is a unique $F(x<y')$ making the diagram commute. Moreover $\Sub(F(x<y'))=\varphi_{x,y'}$.
  Thus $F$ is functorial, and the isomorphisms $\Sub \circ F \cong \Sub$ are natural by construction. This proves part (b).
%
  % For part (c), first notice that it suffices to prove that $F(\bigvee x_i) = \colim F(x_i)$ for a chain $x_i \in L$~\cite[Corollary~1.7]{adamekrosicky:locallypresentable}.
  % Now, the category $\cat{Bool}$ is locally presentable~\cite[Example~1.65]{adamekrosicky:locallypresentable}, and hence so is the category $\cat{PBool}$~\cite[Theorems~1 and~2]{vdbergheunen:colim}. In any such category, a colimit cocone of a directed diagram of monomorphisms consists of monomorphisms~\cite[Proposition~1.62]{adamekrosicky:locallypresentable}. Therefore, we have injections
  % \[\begin{tikzpicture}
  %   \node (t) at (0,1.33) {$F(x_i)$};
  %   \node (b) at (0,0) {$\colim F(x_i)$};
  %   \node (r) at (3,0) {$F(\bigvee x_i)$};
  %   \draw[>->] (t) to node[left] {$f_i$} (b);
  %   \draw[>->] (t) to node[right=6pt,pos=.4] {$F(x_i \leq \bigvee x_i)$} (r);
  %   \draw[>->] (b) to node[below] {$f$} (r);
  % \end{tikzpicture}\]
  % Let $a \in F(\bigvee x_i)$. The Boolean subalgebra of $F(\bigvee x_i)$ generated by $a$ is finite, and so is an element of $K(\Sub(F(\bigvee x_i)))\cong K(\Sub(\bigvee x_i))$. Hence there exist a compact $x_{i_0} \leq \bigvee x_i$ and $a' \in F(x_{i_0})$ such that $F(x_{i_0} \leq \bigvee x_i)(a')=a$. Then $f_{i_0}(a') \in \colim F(x_i)$ gets mapped to $a$ under $f$. Therefore $f$ is surjective. As it was already injective, it is an isomorphism, proving part (c).
\end{proof}

We say a functor $F \colon L \to \cat{Bool}$ \emph{preserves subobjects} when there is a natural isomorphism $\Sub \circ F \cong \Sub$.

Next, we show that the data contained in the functor $L \to \cat{Bool}$ can equivalently be packaged as a piecewise Boolean algebra by taking its colimit.

\begin{lemma}\label{lem:colimit}
  Let $L$ be a piecewise Boolean domain, let $F$ be the functor of Lemma~\ref{lem:functor}, and 
  let the piecewise Boolean algebra $P$ be the colimit of $F$ in $\cat{PBool}$.
  \begin{enumerate}[(a)]
    \item Maximal elements of $L$ correspond bijectively to maximal elements of $\Sub(P)$.
    \item The colimit maps $F(x) \to P$ are injective.
  \end{enumerate}
\end{lemma}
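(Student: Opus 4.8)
The plan is to exploit the fact that $L$, being an algebraic L-domain, is the colimit (in posets) of its compact elements, and that $F$ restricted to $K(L)$ is a diagram of \emph{finite} Boolean algebras with \emph{injective} transition maps. First I would reduce the problem to the compact case: since $P = \colim F$ is computed as a filtered-style colimit along $K(L)$ (every element of $L$ is the directed sup of compacts below it, and colimits in $\cat{PBool}$ of such diagrams are built by gluing the finite pieces, as in Theorem~\ref{thm:colim}), it suffices to understand the colimit maps $F(x) \to P$ for $x$ compact, and then pass to directed suprema for general $x$. So assume $x \in K(L)$.

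For part (a), I would argue as follows. A maximal element of $L$ is a directed sup of compacts, and conversely. Given a maximal $m \in L$, the Boolean algebra $F(m) = \colim_{x \leq m, x \in K(L)} F(x)$ maps into $P$, and I claim its image is a block of $P$, i.e.\ a maximal Boolean subalgebra. The key point is that the downset $\Sub(m) \cong \Sub(F(m))$ cannot be properly enlarged inside $L$ by maximality, and the subobject-preservation isomorphism $\Sub \circ F \cong \Sub$ transports this: any Boolean subalgebra of $P$ containing the image of $F(m)$ would, via the colimit description, correspond to an element of $L$ above $m$. Conversely any block $B$ of $P$ arises as an increasing union of finite Boolean subalgebras, each of which (using injectivity of colimit maps, part (b)) is the image of some $F(x)$, $x \in K(L)$; these $x$ form a directed set whose sup $m$ satisfies $F(m) \to P$ having image $B$, and $m$ is maximal because $B$ is. The correspondence is bijective because $\Sub$ is; I'd phrase it as: $\Max(L) \cong \Max(\Sub(F(-))\text{-images}) \cong \Max(\Sub(P))$, using that distinct maximal elements of $L$ give distinct blocks (again by $\Sub \circ F \cong \Sub$ and the fact that a block determines its downset).

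For part (b), injectivity of $F(x) \to P$, I would first handle $x$ compact. Here $F(x)$ is a finite Boolean algebra, and I want to find \emph{some} block $B \geq x$ in $L$; then the colimit map $F(x) \to P$ factors through $F(B) \to P$, so it suffices to show $F(B) \to P$ is injective, i.e.\ reduce to the maximal case. To find such a $B$: every compact element of an algebraic L-domain lies below a maximal element (this uses directed suprema plus a Zorn's lemma argument on the set of elements above $x$, which is closed under directed suprema — or one can cite that algebraic L-domains have enough maximal elements above each compact, which I'd want to check carefully). Then for $B$ maximal, $F(B)$ is an honest Boolean algebra and the colimit map $F(B) \to P$ is injective because a block, being a maximal commeasurable subalgebra, embeds into $P$ by definition of the colimit in $\cat{PBool}$: elements of $F(B)$ that become equal in $P$ would have to be identified already within some common finite subalgebra, but the transition maps within the diagram restricted to $\Sub(B)$ are all injective, so nothing is collapsed. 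Passing from compact $x$ to general $x$: $F(x) = \colim_{y \leq x \text{ compact}} F(y)$ with injective transition maps and each $F(y) \to P$ injective and compatible, hence $F(x) \to P$ injective as a directed colimit of injections over a diagram of injections.

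The main obstacle, I expect, is the reduction step in (b): establishing that every compact element of $L$ lies below a maximal element, and more importantly that the colimit in $\cat{PBool}$ really is ``built from its finite pieces without unexpected identifications.'' The definition of $\cat{PBool}$-colimit is somewhat implicit, so I would need to lean on the construction behind Theorem~\ref{thm:colim} (from~\cite{vdbergheunen:colim}): there the colimit of the diagram $\Sub(P') \to \cat{Bool}$ recovers $P'$ with the canonical inclusions, and the same machinery applied to our $F$ shows that within any commeasurable subset the operations are inherited from a single $F(x)$, so the colimit maps out of blocks — hence out of all $F(x)$ — are injective. Getting the bookkeeping of this gluing precise, and making sure ``maximal element of $L$'' genuinely matches ``block of $P$'' under it, is where the real work lies; the rest is diagram-chasing with the natural isomorphism $\Sub \circ F \cong \Sub$.
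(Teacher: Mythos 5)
Your outline follows the paper's strategy at a high level (describe the colimit explicitly as $\coprod_x F(x)/\sim$, match maximal elements of $L$ with blocks of $P$, use Zorn's lemma to place each $x$ below a maximal $y$, and factor $p_x = p_y \circ F(x \leq y)$), but the one claim on which everything rests is asserted rather than proved, and the justification you give for it does not work. You claim that elements of $F(B)$ which become equal in $P$ ``would have to be identified already within some common finite subalgebra,'' and conclude from injectivity of the transition maps \emph{inside} $\Sub(B)$ that nothing is collapsed. But $\sim$ is the transitive closure of the generating relation $b \sim F(x \leq y)(b)$, so an identification between two elements of $F(y)$ (for $y$ maximal) can a priori arise from a zig-zag $y = x_1 \geq x_2 \leq x_3 \geq \cdots \leq x_n = y$ that leaves the downset of $y$ entirely and travels through other maximal elements; nothing in your argument excludes this. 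Your fallback appeal to the machinery behind Theorem~\ref{thm:colim} is also unavailable: that theorem concerns diagrams of the form $\PBoolD(P')$ for an already-given piecewise Boolean algebra $P'$, whereas here $F$ is built abstractly from $L$, and the entire point of this lemma is to show that its colimit does not degenerate.

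The missing idea is a normalization of zig-zags. Given $b_1 \sim b_n$ with $b_1 \in F(x_1)$, $b_n \in F(x_n)$ and $x_1, x_n$ maximal, one may assume without loss of generality that $x_i$ is maximal for odd $i$ (enlarge each odd-indexed entry to a maximal element above it) and that $x_{i+1} = x_i \wedge x_{i+2}$ for odd $i$ (shrink each even-indexed entry using that $L$ has nonempty infima, property (2) --- which your proposal never invokes). Naturality of the isomorphism $\Sub \circ F \cong \Sub$ then shows that the subalgebra of $F(x_i)$ identified with a subalgebra of $F(x_{i+2})$ at each step is exactly the image of $F(x_i \wedge x_{i+2})$, which is proper unless $x_i = x_{i+2}$. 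This single computation yields both injectivity of $p_y$ for maximal $y$ (take $x_1 = x_n = y$) and injectivity of $x \mapsto [F(x)]_\sim$ on $\Max(L)$ (the whole of $F(x_1)$ is identified with $F(x_n)$ only if $x_1 = \cdots = x_n$). You correctly flag this as ``where the real work lies,'' but deferring it leaves both halves of (a) and all of (b) unproved; the rest of your reductions (to compact elements, and from general $x$ to maximal $y$) are sound but are the easy part.
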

\begin{proof}
  In general, colimits of piecewise Boolean algebras are hard to compute (see~\cite[Theorem~2]{vdbergheunen:colim}, and also~\cite{haimo:limits}). But injectivity of $F(x \leq y)$ makes it  managable. Namely, $P = \coprod_{x \in L} F(x) / \sim$, where $\sim$ is the smallest equivalence relation satisfying $b \sim F(x \leq y)(b)$ when $x \leq y$ and $b \in F(x)$.
  That is, $F(x_1) \ni b_1 \sim b_n \in F(x_n)$ means there are $x_2,\ldots,x_{n-1} \in L$ with $x_1 \geq x_2 \leq x_3 \geq x_4 \leq x_5 \geq \cdots \geq x_{n-1} \leq x_n$, and $b_i \in F(x_i)$ for $i=2,\ldots,n-1$ that satisfy $b_{i+1}=F(x_i \leq x_{i+1})(b_i)$ for even $i$ and $b_i = F(x_{i+1} \leq x_i)(x_{i+1})$ for odd $i$. Let us write $p_x \colon F(x) \to P$ for the colimiting maps $p_x(a) = [a]_\sim$.

  If $x_1$ and $x_n$ are maximal, then without loss of generality we may assume that $x_i$ is maximal for odd $i$ and that $x_{i+1} = x_i \wedge x_{i+2}$ for odd $i$. By the naturality of Lemma~\ref{lem:functor}(b), this means that the subalgebra $F(x_2)$ of $F(x_1)$ and $F(x_3)$ is identified. So, by injectivity of $F(x \leq y)$, the only way the entire algebra $F(x_1)$ can be identified with $F(x_n)$ is when $x_1=\ldots=x_n$. 

  Define a function $f \colon \Max(L) \to \Max(\Sub(P))$ by $f(x) = p_x[F(x)] = [F(x)]_\sim$.
  The discussion above shows that $f$ is injective.
  Any $B \in \Sub(P)$ is commeasurable, and hence there is $x \in L$ such that $B \subseteq [F(x)]_\sim$. If $B$ is maximal, then we must have $B=f(x)$. Thus $f$ is well-defined, and surjective. This proves (a).

  For part (b), let $x \in L$. It follows from Zorn's Lemma and property~(1) that $x$ is below some maximal $y \in L$. By part (a), then $p_y$ is injective. Therefore $p_x = p_y \circ F(x \leq y)$ is injective, too.
\end{proof}

We are now ready to prove our main result.

\begin{theorem}\label{thm:characterisation}
  Any piecewise Boolean domain is isomorphic to $\Sub(P)$ for a piecewise Boolean algebra $P$.
\end{theorem}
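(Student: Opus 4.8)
\section*{Proof proposal}

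The plan is to build an explicit order isomorphism $\Phi\colon L \to \Sub(P)$, where $P=\colim F$ is the piecewise Boolean algebra of Lemma~\ref{lem:colimit} and $F$ is the subobject-preserving functor of Lemma~\ref{lem:functor}. First I would set $\Phi(x)=p_x[F(x)]$, the image of the colimiting map at $x$. By injectivity of $p_x$ (Lemma~\ref{lem:colimit}(b)) this is a Boolean subalgebra of $P$ isomorphic to $F(x)$, hence commeasurable, so $\Phi(x)\in\Sub(P)$; and $p_x=p_y\circ F(x\leq y)$ makes $\Phi$ monotone. Surjectivity then falls out of the pieces already assembled: an arbitrary $S\in\Sub(P)$ is commeasurable, so $S\subseteq[F(z)]_\sim=\Phi(z)$ for some $z$ (the inclusion extracted in the proof of Lemma~\ref{lem:colimit}(a)); transporting $S$ back along the isomorphism $p_z\colon F(z)\xrightarrow{\ \sim\ }\Phi(z)$ and then along $\Sub(F(z))\cong\Sub(z)$ of Lemma~\ref{lem:functor}, one obtains $b\leq z$ whose corresponding subalgebra of $F(z)$ is, by naturality, $F(b\leq z)[F(b)]=p_z^{-1}(S)$, so that $\Phi(b)=S$. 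The very same manoeuvre shows that $\Phi(x)\subseteq\Phi(y)$ always yields $b\leq y$ with $\Phi(b)=\Phi(x)$; hence once $\Phi$ is known to be injective, it is automatically order-reflecting, and therefore an order isomorphism.

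So the crux is injectivity of $\Phi$, which I would reduce to injectivity on atoms. The reduction rests on an internal fact: in any piecewise Boolean domain every $x$ is the join of the atoms below it. Indeed $\Sub(x)\cong\Sub(F(x))$, in $\Sub(F(x))$ the top $F(x)$ is the join of its four-element subalgebras since a Boolean algebra is generated by its elements, and the join of the atoms of $\downarrow x$ taken inside $\downarrow x$ agrees with the join in $L$ (if $z$ bounds all those atoms then so does $z\wedge x$, which lies in $\downarrow x$ and hence dominates $x$). Granting injectivity on atoms, suppose $\Phi(x)=\Phi(y)$: each atom $a\leq x$ has $\Phi(a)$ a four-element subalgebra of $\Phi(x)=\Phi(y)$, which under $\Sub(\Phi(y))\cong\Sub(F(y))\cong\Sub(y)$ corresponds to an atom $a'\leq y$ with $\Phi(a')=\Phi(a)$ (again by naturality), whence $a=a'$ and $a\leq y$; by symmetry $x$ and $y$ dominate exactly the same atoms, so $x=y$.

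The main obstacle is thus showing that distinct atoms $a_1\neq a_2$ of $L$ give $\Phi(a_1)\neq\Phi(a_2)$, and for this I would have to open up the explicit quotient $P=\coprod_x F(x)/{\sim}$ of Lemma~\ref{lem:colimit}. Writing $p$ for a generator of the four-element algebra $F(a_1)$ and $q$ for one of $F(a_2)$, we have $\Phi(a_i)=\{0,[p]_\sim,[\neg p]_\sim,1\}$ and $\{0,[q]_\sim,[\neg q]_\sim,1\}$, so $\Phi(a_1)=\Phi(a_2)$ would force $[p]_\sim\in\{[q]_\sim,[\neg q]_\sim\}$. The key point is that since $a_1$ is an atom the only element strictly below it is $\bot$, and $F(\bot\leq a_1)$ has image $\{0,1\}$; as $p\neq 0,1$, no downward step of a $\sim$-zigzag can leave $F(a_1)$, and more generally — bookkeeping which four-element subalgebra one sits in, via $\Sub(F(z))\cong\Sub(z)$ — every element $\sim p$ has the form $F(a_1\leq z)(p)$ with $z\geq a_1$. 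Hence $[p]_\sim=[q]_\sim$ would entail $q=F(a_1\leq a_2)(p)$ and so $a_1\leq a_2$, impossible for distinct atoms, and likewise for $\neg q$. I expect this zigzag computation in the colimit to be the only genuinely delicate step; the rest is assembling Lemmas~\ref{lem:functor} and~\ref{lem:colimit}.
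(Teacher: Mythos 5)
Your proposal is correct, and it shares the paper's skeleton: the same map $x \mapsto p_x[F(x)]$ built from Lemmas~\ref{lem:functor} and~\ref{lem:colimit}, the same monotonicity argument via $p_x = p_y \circ F(x\leq y)$, and essentially the same surjectivity argument via subobject-preservation. Where you genuinely diverge is injectivity. The paper dispatches it abstractly by exhibiting a left inverse $g(B)=\bigwedge\{x : B \subseteq f(x)\}$, leaning on the nonempty-infima axiom and the observation that $f(x)\subseteq f(y)$ with $y\leq x$ forces $F(y\leq x)$ to be an isomorphism. You instead prove that every element of $L$ is the join of the atoms below it (correctly reduced to the join of atoms in the algebraic lattice $\Sub(x)\cong\Sub(F(x))$, then lifted to $L$ using binary meets), reduce injectivity to atoms, and settle the atomic case by an explicit zigzag induction in the quotient $\coprod_x F(x)/\sim$. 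That induction is sound: the invariant ``every element equivalent to a generator $p$ of $F(a_1)$ is $F(a_1\leq z)(p)$ for some $z\geq a_1$'' survives upward steps by functoriality and downward steps by the subobject bookkeeping you describe (the four-element subalgebra generated by the current element corresponds to $a_1$ under $\Sub(F(z))\cong\Sub(z)$, so it must sit inside the image of any $F(w\leq z)$ one descends along, forcing $a_1\leq w$). What your route buys is concreteness — it makes visible exactly which identifications the colimit performs, in the same spirit as the paper's own zigzag analysis of maximal elements in Lemma~\ref{lem:colimit}(a) — at the cost of two extra ingredients (the atomic decomposition and the induction) that the paper's two-line lattice-theoretic argument avoids. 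Both are valid; yours is longer but arguably more self-contained at the crucial step.
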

\begin{proof}
  Let $L$ be a piecewise Boolean domain. Fix a functor $F$ as in Lemma~\ref{lem:functor}, and its piecewise Boolean algebra colimit $p_x \colon F(x) \to P$ as in Lemma~\ref{lem:colimit}. Define $f \colon L \to \Sub(P)$ as $f(x) = p_x[F(x)]$.

  We first prove that $f$ is surjective. Any $B \in \Sub(P)$ is commeasurable, so $B$ is a Boolean subalgebra of $p_y[F(y)]$ for some $y \in L$. Hence $p_y^{-1}(B) \in \Sub(F(y))$. Because $F$ preserves subobjects, $p_y^{-1}(B) = F(x \leq y)[F(x)]$ for some $y \leq x$. Then:
  \[
    f(x) 
    = p_x[F(x)]
    = p_y \circ F(x \leq y)[F(x)]
    = p_y [p_y^{-1}(B)]
    = B.
  \]

  Next we prove that $f$ is injective by exhibiting a left-inverse $g \colon \Sub(P) \to L$. Set $g(B) = \bigwedge \{ x \in L \mid B \subseteq f(x) \}$. Note that $g(f(x)) = \bigwedge \{ y \mid [F(x)]_\sim \subseteq [F(y)]_\sim \} \leq x$. Now, if $y \leq x$ then $[F(y)]_\sim = p_y[F(y)] = p_x \circ F(y \leq x)[F(y)] \subseteq [F(x)]_\sim$. Hence if also $[F(x)]_\sim \subseteq [F(y)]_\sim$, then $F(y \leq x)$ is an isomorphism, and $x=y$. So $g(f(x))=x$.

  Clearly $g(B) \leq g(C)$ when $B \subseteq C$, so $f(x) \subseteq f(y)$ implies $x \leq y$.
  Conversely, if $x \leq y$, then $f(x) = p_x[F(x)] = p_y[F(x \leq y)[F(x)]] \subseteq p_y[F(y)] = f(y)$.
  Thus $f$ is an order isomorphism $\Sub(P) \cong L$.
\end{proof}

% Recall that $x$ \emph{covers} $y$ when $x>y$ but $x>z>y$ for no $z$,
% that an \emph{atom} covers $0$,
% and that a poset is \emph{atomistic} if every element is a supremum of atoms.
% Notice that for a finite Boolean algebra $B$, the lattice $\B(B)$ is atomistic, so that every piecewise Boolean domain is atomistic by properties~(3) and~(4).

\section{Partition lattices}\label{sec:partitionlattices}

There exist many characterisations of finite partition lattices~\cite{sachs:partitionlattices,sasakifujiwara:partitionlattices,ore:equivalencerelations,firby:compactifications1,aigner:uniformitaet,stonesiferbogart:partition,yoon:partition}.
We now summarise one of them that we will use to reformulate condition~(4).
In a partition lattice, the intervals $[p,1]$ for atoms $p$ are again partition lattices. This leads to the following result. 
For terminology, recall that a finite lattice is \emph{(upper) semimodular}
% when $\height(x)+\height(y) \geq \height(x \wedge y) + \height(x \vee y)$, or equivalently, 
when $x$ covers $x \wedge y$ implies that $x \vee y$ covers $y$, %~\cite[Corollary IV.1.1]{birkhoff:lattices}.
that a \emph{geometric lattice} is a finite atomistic semimodular lattice, and that an element $x$ of a lattice is called \emph{modular} if $a \vee (x \wedge y) = (a \vee x) \wedge y$ for all $a \leq y$.

\begin{theorem}[\cite{stonesiferbogart:partition,yoon:partition}]
\label{thm:stonesiferbogart}
  Suppose $L$ is a geometric lattice with a modular coatom, and the interval $[p,1]$ is a partition lattice of height $n-1$ for all atoms $p$.
  If $n\leq 4$, assume that $L$ has $n \choose 2$ atoms.
  Then $L$ is a partition lattice of height $n$. 
  Conversely, a partition lattice of height $n$ satisfies these requirements.
\end{theorem}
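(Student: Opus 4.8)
The converse direction is a routine inspection of the partition lattice $\Pi_n$ of partitions of an $n$-element set: it is geometric of rank $n-1$, its atoms being the partitions with exactly one two-element block; the partition consisting of a fixed singleton together with its complementary block is a modular coatom; and contracting an atom --- passing to the interval above it and fusing the corresponding two-element block --- produces $\Pi_{n-1}$. So the plan is to prove the forward implication.

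The cases $n\le 4$ I would dispose of by a direct check, and this is exactly where the hypothesis on the number of atoms is used: in ranks at most $3$ there are simple matroids which are geometric, carry a modular coatom, and have every upper interval $[p,1]$ a partition lattice, yet are not partition lattices, and only the count $\binom{n}{2}$ rules them out. For $n\ge 5$ I would argue as follows. Fix a modular coatom $c$ and, using atomisticity, an atom $p\not\le c$; then automatically $p\wedge c=0$ and $p\vee c=1$. The modular law for $c$, applied with the element $p$, shows that $x\mapsto x\vee p$ and $y\mapsto y\wedge c$ are mutually inverse lattice isomorphisms $[0,c]\cong[p,1]$ (injectivity of one composite being immediate, and the other following because $w\vee p$ covers $w$ in the geometric lattice $L$). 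Since $[p,1]$ is a partition lattice of height $n-1$ by hypothesis, so is $[0,c]$; write $[0,c]\cong\Pi(T)$ for an $(n-1)$-element set $T$, so that the atoms of $L$ below $c$ are identified with the two-element subsets of $T$. One then shows, using that $[q,1]$ is a partition lattice for each atom $q\not\le c$, that the atoms of $L$ not below $c$ are in bijection with $T$: the atom attached to $t\in T$ is the unique $q_t\not\le c$ such that $q_t\vee\{t'\}$ has a three-element set of atoms below it precisely when $t'=t$ (equivalently, the recovered ``star'' of two-element subsets meeting $t$ pins down $t$). Setting $S=T\sqcup\{\ast\}$, define $F\colon L\to\Pi(S)$ by declaring that the trace of $F(x)$ on $T$ is $x\wedge c\in[0,c]\cong\Pi(T)$, and that $\ast$ is joined to the block spanned by $\{t\in T : q_t\le x\}$ (remaining a singleton when this set is empty). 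It then remains to verify that $F$ is well defined --- the set $\{t : q_t\le x\}$ must lie inside one block of $x\wedge c$, which follows from analysing the rank-two joins $q_t\vee q_{t'}$ --- order preserving, bijective (atomisticity gives injectivity and surjectivity), and rank preserving, so that the monotone bijection $F$ is an isomorphism.

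I expect the main obstacle to be this last verification: showing that $\ast$ is consistently attachable and that $F$ intertwines the two grading functions, which amounts to checking that the copy of $\Pi_{n-1}$ peeled off below the modular coatom is glued to the star of atoms outside $c$ in exactly the way that recreates $\Pi_n$, rather than a truncation or some other single-element extension. This is precisely the point at which the modular-coatom hypothesis is essential: it rigidifies the peeling and forces $c$ to behave like a ``big block plus singleton'' partition. It is also why in low rank, where a modular coatom controls too little, the number of atoms has to be assumed outright.
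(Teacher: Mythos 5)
The paper does not actually prove this theorem: it is imported from the cited sources \cite{stonesiferbogart:partition,yoon:partition} and used as a black box in Proposition~\ref{prop:characterisation}, so there is no in-house argument to compare yours against. That said, your plan is the standard one from that literature (and is close in spirit to the Ore/Sasaki--Fujiwara ``geometry of stars'' characterisation sketched in the paper's appendix): induct on height, use a modular coatom $c$ and a complementary atom $p$ to obtain $[0,c]\cong[p,1]\cong$ a smaller partition lattice, then re-attach one extra point using the atoms outside $c$. The interval isomorphism is correct, though note that the two composites use different modularity facts: $(y\wedge c)\vee p=y$ for $y\geq p$ uses modularity of $c$ (take $a=p\leq y$ in the modular law), while $(x\vee p)\wedge c=x$ for $x\leq c$ uses modularity of the atom $p$, which holds automatically in a semimodular lattice. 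Also mind the indexing: $[0,c]$ has height $n-1$, so it is the partition lattice of an $n$-element set $T$, not an $(n-1)$-element one (then $|S|=n+1$ and $\Pi(S)$ has height $n$, as required).

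The genuine gap is the step you dispose of in one sentence: ``the atoms of $L$ not below $c$ are in bijection with $T$.'' This is the heart of the argument and nothing in your sketch establishes it. You would need (i) that for each atom $q\not\leq c$ the set of atoms $a\leq c$ for which $q\vee a$ has three atoms below it is exactly the pencil of pairs through a single point $t\in T$ --- this is precisely where the hypothesis that $[q,1]$ is a partition lattice for those atoms $q$, together with semimodularity (to control the rank of $q\vee a$), must be deployed, and it is not obvious; (ii) that distinct atoms off $c$ yield distinct points; and (iii) that every $t\in T$ arises, i.e.\ that there are exactly $n$ atoms outside $c$ --- for $n\geq 5$ there is no atom-count hypothesis, so this must be extracted from the structure of the upper intervals, and you give no indication how. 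Until (i)--(iii) are in place the map $F$ is not even defined, let alone shown to be well defined, rank-preserving and bijective (your acknowledged ``main obstacle''). So what you have is a correct outline with the two hardest lemmas left open rather than a proof; completing it amounts to carrying out the verifications done in the cited papers.
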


% \begin{lemma}\label{lem:finiteheight}
%   Suppose a poset has properties~(1)--(3), and the downset of each compact element is finite. Then an element is compact if and only if it has finite height. 
% \end{lemma}
% \begin{proof}
%   If $x$ is compact, then $\downset x$ is finite, whence $x$ has finite height.
%   If $x$ has finite height, then $x \in K(\downset x)$. As we saw in the proof of Theorem~\ref{thm:characterisation}, properties (1)--(3) ensure that $K(L) = \bigcup_x K(\downset x)$, whence $x$ is compact. 
% \end{proof}

Let us call a lattice \emph{cogeometric} when it is dual to a geometric lattice; this is equivalent to being finite, lower semimodular, and coatomistic. 
We can now simplify condition~(4), showing that piecewise Boolean domains are domains that are determined entirely by their behaviour on the bottom three rungs.

\begin{proposition}\label{prop:characterisation}
  A poset is a piecewise Boolean domain precisely if it meets conditions~(1)--(3) and
  \begin{enumerate}
    \item[(4')] the downset of a compact element is cogeometric and has a modular atom;
    \item[(4'')] each element of height $n \leq 3$ covers exactly $n+1 \choose 2$ elements.
  \end{enumerate}
\end{proposition}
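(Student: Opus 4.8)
The plan is to show that conditions (4') and (4'') together are equivalent to condition (4) for a poset already satisfying (1)--(3), so that Proposition~\ref{prop:characterisation} follows immediately from Definition~\ref{def:spectralposet}. Since (1)--(3) make $L$ an algebraic L-domain, every downset of a compact element is an algebraic lattice, in fact a finite lattice (being the downset of a compact element in an algebraic L-domain). So it suffices to prove: for a \emph{finite} lattice $M$, being dual to a finite partition lattice is equivalent to ($M$ is cogeometric and has a modular atom) together with (every element of $M$ of height $n \leq 3$ covers exactly $\binom{n+1}{2}$ elements). Here I should be careful about what "height $n$" and "covers" mean in $M$ versus in its dual $M\op$: an element of height $n$ in $M$ covering $\binom{n+1}{2}$ elements corresponds to an element of coheight $n$ in $M\op$ — i.e.\ of height $h-n$ where $h$ is the total height — being covered by $\binom{n+1}{2}$ elements; equivalently, in the partition lattice $N = M\op$, each element at distance $n$ below the top is covered by $\binom{n+1}{2}$ elements.

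First I would translate to the dual: $M$ is cogeometric with a modular atom iff $N := M\op$ is geometric with a modular coatom. So the task reduces to showing that a finite geometric lattice $N$ with a modular coatom is a partition lattice if and only if the "counting condition" (4'') holds, namely that each element $z \in N$ with $\mathrm{ht}(z) = h - n$ (where $h = \mathrm{ht}(N)$) is covered by exactly $\binom{n+1}{2}$ elements when $n \leq 3$. I would set this up as an induction on the height $h$ of $N$, using Theorem~\ref{thm:stonesiferbogart} as the inductive engine: that theorem says $N$ is a partition lattice of height $h$ provided $N$ is geometric with a modular coatom, each interval $[p,1]$ for an atom $p$ is a partition lattice of height $h-1$, and (when $h \leq 4$) $N$ has $\binom{h}{2}$ atoms.

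The key steps of the induction: (i) The number of atoms of $N$ is the number of elements covered by... no — atoms are elements of height $1$, which in $M$ are coatoms, i.e.\ elements of coheight... let me instead note that in $N$, the top element $1$ has $\mathrm{ht}(1) = h$, so with $n = 0$... that gives the wrong count; rather, I want to apply the counting condition at an element $z$ with $\mathrm{ht}(z) = h-n$. Taking $z = 1$ gives $n = 0$ and $\binom{1}{2} = 0$, consistent (the top covers... no, in a partition lattice the top covers its coatoms, of which there are many). I see the counting is really about being \emph{covered by}, i.e.\ the lower covers. Rephrasing cleanly: condition (4'') on $M$ says the number of \emph{lower} covers of an element of height $n \leq 3$ is $\binom{n+1}{2}$; dualizing, in $N$ an element of \emph{coheight} $n \leq 3$ has $\binom{n+1}{2}$ \emph{upper} covers. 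An element of coheight $1$ is a coatom: but that's the wrong direction again. Let me just say: in $N$, an element $z$ with $\mathrm{ht}(z) = h - n$ for $n \leq 3$ has exactly $\binom{n+1}{2}$ upper covers. For $n=1$ (coatoms) this says each coatom has $1 = \binom{2}{2}$ upper cover, automatic. For $n=2$, $n=3$ we get genuine constraints. Dually in a partition lattice of height $n$ the top is covered-from-below by (has as lower covers) exactly $\binom{n+1}{2}$... hmm, a partition of an $(n+1)$-element set into $n$ blocks is determined by choosing the pair that is merged, so there are $\binom{n+1}{2}$ coatoms — yes. More generally an element at height $h-n$ in a partition lattice of height $h$ is a partition with $n+1$ blocks of some underlying structure, and its upper covers (merging two of the $n+1$ blocks) number $\binom{n+1}{2}$. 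Good — so (4'') is exactly the "$\binom{n}{2}$ atoms" hypothesis of Theorem~\ref{thm:stonesiferbogart} applied at every interval. Thus the induction runs: the intervals $[p,1]$ in $N$ are geometric (standard, upper intervals of geometric lattices are geometric), have a modular coatom (this needs a small argument — the image of the modular coatom of $N$, or use that modularity of a coatom is inherited by upper intervals, which I'd cite or check), satisfy (4'') since heights just shift, and have fewer height, so by induction each is a partition lattice; then Theorem~\ref{thm:stonesiferbogart} with the $\binom{h}{2}$-atoms condition coming from (4'') (at $z = $ the top, which for $n = h$... only when $h \leq 3$; for $h = 4$ the hypothesis requires $\binom{4}{2} = 6$ atoms — but (4'') only covers $n \leq 3$!). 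This is the subtle point.

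The main obstacle I anticipate is precisely this mismatch at $h = 4$: Theorem~\ref{thm:stonesiferbogart} needs the atom count $\binom{h}{2}$ for $h \leq 4$, but condition (4'') only constrains elements of height $\leq 3$ in $M$, i.e.\ coheight $\leq 3$ in $N$, which does \emph{not} directly pin down the number of atoms of a height-$4$ lattice $N$ (atoms have coheight $3$ only if $h = 4$ — wait, actually if $h = 4$ then an atom has height $1 = h - 3$, so $n = 3 \leq 3$ and (4'') \emph{does} apply, but it counts \emph{upper covers} of the atom, not the atoms themselves). So I must derive the atom count of $N$ from the local data. The resolution: the number of atoms of a geometric lattice equals the number of "points", and for a partition lattice of height $h$ on $h+1$ points this is $\binom{h+1}{2}$ — but $N$ has height $h$, and counting: an atom of $N$ is a lower cover of... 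Actually I realize the cleanest fix is to phrase (4'') symmetrically enough, or to observe that the atom count of $N$ can be recovered by a double-counting argument from the covering numbers of height-$2$ and height-$3$ elements (each height-$2$ element of a geometric lattice covers exactly... is covered by $\binom{3}{2} = 3$ atoms? no — a height-$2$ element of a partition lattice of height $h$ is a partition into $h-1$ blocks, covering $\binom{?}{2}$ elements below). I would resolve this by a careful flag-counting lemma tying $\#\text{atoms}$ to the (4'')-data at heights $h-2, h-3$, and I expect this bookkeeping — reconciling the "top three rungs" of $M$ with the "top three rungs" of $N$, which are the \emph{bottom} three rungs — to be where the real work lies. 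I would also need to double-check the base cases $h \leq 2$ where all geometric lattices with the stated property are trivially partition lattices, and verify the modular-coatom hypothesis propagates correctly through the induction.
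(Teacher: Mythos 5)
Your overall strategy --- dualise, and induct on height using Theorem~\ref{thm:stonesiferbogart} --- is exactly the paper's, and the forward direction (partition lattices satisfy (4') and (4''), checked on $\Pi_1,\dots,\Pi_4$) is fine. The problem is that you stop short of a proof at precisely the point you yourself flag as ``where the real work lies'': supplying the atom count that Theorem~\ref{thm:stonesiferbogart} demands for small lattices. You do not prove the ``careful flag-counting lemma'' you invoke, so as written the converse direction is incomplete.

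Moreover, the obstacle you are trying to engineer around is not actually there; it comes from reading ``partition lattice of height $n$'' in Theorem~\ref{thm:stonesiferbogart} as a lattice of poset-height $n$. Internal consistency of that theorem (a lattice with $\binom{n}{2}$ atoms whose upper intervals at atoms are ``of height $n-1$'') forces the convention that ``partition lattice of height $n$'' means $\Pi_n$, the partition lattice of an $n$-element set, whose poset-height is $n-1$; the same convention is used in the caption of Figure~\ref{fig:smallpartitionlattices}. With that reading, when you apply the theorem to $N=\Sub(x)\op$ for a compact $x$ of height $h$, the target is $\Pi_{h+1}$, so the extra atom-count hypothesis is required only when $h+1\leq 4$, i.e.\ $h\leq 3$. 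In that case the atoms of $N$ are the coatoms of $\Sub(x)$, i.e.\ the lower covers of $x$ in $L$, and condition (4'') applied to $x$ itself (an element of height $h\leq 3$) says there are exactly $\binom{h+1}{2}$ of them --- which is exactly the $\binom{h+1}{2}$ atoms that $\Pi_{h+1}$ must have. No double counting at heights $h-2$, $h-3$ is needed; the calibration of (4'') to ``height $n\leq 3$'' is exact. For $h\geq 4$ the atom-count hypothesis is simply not required. The remaining inductive ingredients are as you describe: $[p,1]_N\cong\Sub(y)\op$ for $y$ a lower cover of $x$, which is a dual partition lattice by the induction hypothesis (lower covers of compact elements are again compact, their downsets being finite), and the modular atom of $\Sub(x)$ dualises to a modular coatom of $N$. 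Once the indexing is sorted out, the paper's one-line induction goes through and your proposed extra lemma is unnecessary.
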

\begin{proof}
  We show that we may replace condition~(4) in Definition~\ref{def:spectralposet} by (4') and (4'').
  Observe that a dual lattice having a modular coatom is equivalent to the lattice itself having a modular atom. 
  Assuming condition~(4) and $x \in K(L)$, then $\Sub(x)$ is dual to a finite partition lattice, so that condition~(4') is satisfied. For $\height(x) \leq 4$, condition~(4'') is  verified by computing the partition lattices of height up to three, see Figure~\ref{fig:smallpartitionlattices}.

  Conversely, assume~(4') and~(4''). Then the downset of each compact element is finite, so that compact elements have finite height. Hence condition~(4) follows by induction on the height by Theorem~\ref{thm:stonesiferbogart}.
\end{proof}

\begin{figure}[h]
  \[
  \Pi_1 = \begin{aligned}\begin{tikzpicture}[scale=0.75,inner sep=0pt,font=\tiny]
      \node at (0,0) {$1$}; 
    \end{tikzpicture}\end{aligned}
  \qquad\qquad
  \Pi_2 = \begin{aligned}\begin{tikzpicture}[scale=0.75,inner sep=0pt,font=\tiny,inner sep=1pt]
      \node (l) at (0,0) {$12$}; \node (t) at (0,1) {$1/2$}; \draw (l) to (t); 
    \end{tikzpicture}\end{aligned}
  \qquad\qquad
  \Pi_3 = \begin{aligned}\begin{tikzpicture}[yscale=0.8,inner sep=0pt,font=\tiny,inner sep=1pt]
      \node (l) at (0,.33) {$1/2/3$};
      \node (a) at (-1,1) {$1/23$};
      \node (b) at (0,1) {$13/2$};
      \node (c) at (1,1) {$12/3$};
      \node (t) at (0,1.66) {$123$};
      \draw (l) to (a); \draw (l) to (b); \draw (l) to (c);
      \draw (a) to (t); \draw (b) to (t); \draw (c) to (t);
    \end{tikzpicture}\end{aligned}
  \]
  \[\Pi_4 = \begin{aligned}\begin{tikzpicture}[xscale=1.66,yscale=1.3,font=\tiny,inner sep=1pt]
    \node (l) at (0,.33) {$1/2/3/4$};
    \node (l1) at (-2.5,1) {$12/3/4$};
    \node (l2) at (-1.5,1) {$13/2/4$};
    \node (l3) at (-.5,1) {$14/2/3$};
    \node (l4) at (.5,1) {$1/23/4$};
    \node (l5) at (1.5,1) {$1/3/24$};
    \node (l6) at (2.5,1) {$1/2/34$};
    \node (t1) at (-3,2) {$123/4$};
    \node (t2) at (-2,2) {$124/3$};
    \node (t3) at (-1,2) {$13/24$};
    \node (t4) at (0,2) {$12/34$};
    \node (t5) at (1,2) {$14/23$};
    \node (t6) at (2,2) {$134/2$};
    \node (t7) at (3,2) {$1/234$};
    \node (t) at (0,2.66) {$1234$};
    \draw (l.north) to (l1); \draw (l.north) to (l2); \draw (l.north) to (l3);
    \draw (l.north) to (l4); \draw (l.north) to (l5); \draw (l.north) to (l6);
    \draw (t.south) to (t1); \draw (t.south) to (t2); \draw (t.south) to (t3);
    \draw (t.south) to (t4); \draw (t.south) to (t5); \draw (t.south) to (t6); \draw (t.south) to (t7);
    \draw (l1.north) to (t1.south); \draw (l1.north) to (t2.south); \draw (l1.north) to (t4.south);
    \draw (l2.north) to (t1.south); \draw (l2.north) to (t3.south); \draw (l2.north) to (t6.south);
    \draw (l3.north) to (t2.south); \draw (l3.north) to (t5.south); \draw (l3.north) to (t6.south);
    \draw (l4.north) to (t1.south); \draw (l4.north) to (t5.south); \draw (l4.north) to (t7.south);
    \draw (l5.north) to (t2.south); \draw (l5.north) to (t3.south); \draw (l5.north) to (t7.south);
    \draw (l6.north) to (t4.south); \draw (l6.north) to (t6.south); \draw (l6.north) to (t7.south);
  \end{tikzpicture}\end{aligned}
  \]
  \caption{The partition lattices of height up to three.}
  \label{fig:smallpartitionlattices}
\end{figure}
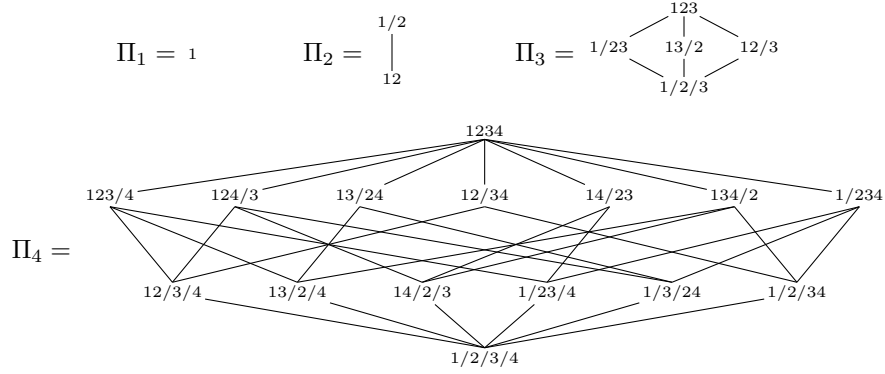

\section{Piecewise Boolean diagrams}\label{sec:spectraldiagrams}

\begin{definition}\label{def:spectraldiagram}
  A \emph{piecewise Boolean diagram} is a subobject-preserving functor from a piecewise Boolean domain to $\cat{Bool}$. 
  A morphism of piecewise Boolean diagrams from $F \colon L \to \cat{Bool}$ to $F' \colon L' \to \cat{Bool}$ consists of a morphism $\varphi \colon L \to L'$ of posets and a natural transformation $\eta \colon F \Rightarrow F' \circ \varphi$.
  % \[\begin{tikzpicture}[xscale=1.5,yscale=1.25]
  %   \node (t) at (0,1) {$L$};
  %   \node (b) at (0,0) {$L'$};
  %   \node (r) at (2,1) {$\cat{Bool}$};
  %   \draw[->] (t) to node[above] {$F$} (r);
  %   \draw[->] (b) to node[below] {$F'$} (r);
  %   \draw[->] (t) to node[left] {$\varphi$} (b);
  %   \draw[-new double arrowhead,double,double equal sign distance] (0.5,.9) to node[right] {$\eta$} (0.5,0.33);
  % \end{tikzpicture}\]
  Piecewise Boolean diagrams and their morphisms form a category $\cat{PBoolD}$. Composition is given by $(\psi,\theta)\circ(\varphi,\eta) = (\psi \circ \varphi, \theta\varphi \cdot \eta)$, and identies are $(\id,\Id)$.
  \[\begin{tikzpicture}[xscale=1.5,yscale=1.3]
    \node (l) at (-1,1) {$L$};
    \node (t) at (0,1) {$L'$};
    \node (b) at (0,0) {$L''$};
    \node (r) at (2,1) {$\cat{Bool}$};
    \draw[->] (l) to[out=45, in=135] node[above] {$F$} (r);
    \draw[->] (t) to node[above] {$F'$} (r);
    \draw[->] (b) to node[below] {$F''$} (r);
    \draw[->] (t) to node[left] {$\psi$} (b);
    \draw[->] (l) to node[above] {$\varphi$} (t);
    \draw[-new double arrowhead,double,double equal sign distance] (0.5,.9) to node[right] {$\theta$} (0.5,0.33);
    \draw[-new double arrowhead,double,double equal sign distance] (0.5,1.65) to node[right] {$\eta$} (0.5,1.1);
  \end{tikzpicture}\]
\end{definition}

Notice that, because $F$ preserves subobjects, also $\Sub(\varphi(x))=\varphi[\Sub(x)]$, so that $\varphi$ preserves directed suprema.

The functor $\Sub$ extends from piecewise Boolean domains to piecewise Boolean diagrams as follows. 
% Let us denote the direct image $\{f(x) \mid x \in B\}$ of a subset $B \subseteq X$ under a function $f \colon X \to Y$ by $f[B]$.

\begin{proposition}\label{prop:specfunctor}
  There is a functor $\PBoolD \colon \cat{PBool} \to \cat{PBoolD}$ defined as follows. On objects $P \in \cat{PBool}$, define $\PBoolD(P) \colon \Sub(P) \to \cat{Bool}$ by $B \mapsto B$. On morphisms $f \colon P \to P'$, define $\PBoolD(f)_B = f\restriction_B \colon B \to f[B]$.
  \qed
\end{proposition}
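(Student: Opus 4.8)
The plan is to verify directly that the proposed assignments on objects and morphisms are well-defined and satisfy the functor axioms. Since the target statement is essentially a bookkeeping claim, the work is all in checking that each piece of data lands where it should and that composition is respected; there is no deep obstacle, only the need to be careful about where direct images and restrictions interact.

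First I would check that $\PBoolD(P)$ is a genuine object of $\cat{PBoolD}$. We already know from the earlier proposition that $\Sub(P)$ is a piecewise Boolean domain, so it remains to see that the identity-on-objects assignment $B \mapsto B$ extends to a subobject-preserving functor $\Sub(P) \to \cat{Bool}$. On morphisms, an inclusion $B \subseteq B'$ in $\Sub(P)$ must go to the inclusion $B \hookrightarrow B'$ of Boolean algebras; functoriality is immediate since inclusions compose to inclusions. Subobject-preservation requires a natural isomorphism $\Sub \circ \PBoolD(P) \cong \Sub$; but $\Sub(\PBoolD(P))$ sends $B \in \Sub(P)$ to the lattice of Boolean subalgebras of $B$, which is exactly the principal ideal $\Sub(B) = \{A \in \Sub(P) \mid A \subseteq B\}$ of $B$ in $\Sub(P)$, and this identification is visibly natural in $B$. (One should note that because $P$ is piecewise Boolean, every Boolean subalgebra of $B$ is again a commeasurable subalgebra of $P$, so these two notions of $\Sub(B)$ agree.)

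Next I would check that $\PBoolD(f)$ is a morphism of piecewise Boolean diagrams from $\PBoolD(P)$ to $\PBoolD(P')$. Its poset component is $\Sub(f) \colon \Sub(P) \to \Sub(P')$, $B \mapsto f[B]$, which is monotone; this is the already-established functoriality of $\Sub$ on $\cat{PBool}$. Its natural-transformation component should be $\eta_B = f\restriction_B \colon B \to f[B]$, that is, a natural transformation $\PBoolD(P) \Rightarrow \PBoolD(P') \circ \Sub(f)$. Naturality means that for $B \subseteq B'$ the square relating $f\restriction_B$, $f\restriction_{B'}$, the inclusion $B \hookrightarrow B'$, and the inclusion $f[B] \hookrightarrow f[B']$ commutes, which holds because restricting $f$ further and then including is the same as including and then restricting. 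One small thing to confirm is that $f\restriction_B$ is indeed a homomorphism of Boolean algebras onto $f[B]$: this is exactly the requirement that morphisms of piecewise Boolean algebras preserve the algebraic structure wherever defined, applied inside the commeasurable subalgebra $B$.

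Finally I would verify the functor axioms for $\PBoolD$ itself. Identities are clear: $\PBoolD(\id_P)$ has poset component $\id_{\Sub(P)}$ and natural-transformation component $\id\restriction_B = \id_B$, i.e.\ the identity morphism $(\id, \Id)$. For composition, given $f \colon P \to P'$ and $h \colon P' \to P''$, the poset component of $\PBoolD(h \circ f)$ is $\Sub(h \circ f) = \Sub(h) \circ \Sub(f)$ by functoriality of $\Sub$, and the natural-transformation component is $(h \circ f)\restriction_B = (h\restriction_{f[B]}) \circ (f\restriction_B)$, which is precisely the vertical-then-horizontal composite $\theta\varphi \cdot \eta$ prescribed in Definition~\ref{def:spectraldiagram}. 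The only place demanding any attention is lining up the whiskering $\theta\varphi$ correctly with the indexing by $B$, but once the diagram is drawn this is automatic. I do not anticipate a genuine obstacle; the ``hardest'' point is merely the routine check that restrictions of piecewise Boolean morphisms to commeasurable subalgebras are honest Boolean homomorphisms and that all the squares in sight commute.
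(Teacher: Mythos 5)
Your verification is correct and is exactly the routine check the paper has in mind — the proposition is stated with the proof omitted (the \qed after the statement), so the intended argument is precisely this direct confirmation that $B \mapsto B$ with inclusions is a subobject-preserving functor, that $f\restriction_B$ gives a natural transformation into $\PBoolD(P')\circ\Sub(f)$, and that restriction respects identities and composition. All the individual points you flag (every Boolean subalgebra of $B$ is commeasurable in $P$, $f\restriction_B$ is an honest Boolean homomorphism onto $f[B]$, and $(h\circ f)\restriction_B = h\restriction_{f[B]}\circ f\restriction_B$ matches the composition law of Definition~\ref{def:spectraldiagram}) check out.
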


There is also a functor in the other direction. We will prove that the two functors in fact form an equivalence.

\begin{proposition}\label{prop:colim}
  There is a functor $\colim \colon \cat{PBoolD} \to \cat{PBool}$ defined as follows.
  On objects $F \colon L \to \cat{Bool}$, let $\colim(F)$ be the colimit $p_x \colon F(x) \to \coprod F(x) / \sim$.
  On morphisms $(\varphi,\eta) \colon F \to F'$, let $\colim(\varphi,\eta)$ be the morphism $\colim(F) \to \colim(F')$ induced by the cocone $p'_{\varphi(x)} \circ \eta_x \colon F(x) \to \colim(F')$.
  \qed
\end{proposition}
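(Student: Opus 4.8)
The plan is to verify that the assignments described are well-defined and respect composition, with the main effort going into checking that $\colim$ is a functor — that is, that the induced morphisms on colimits exist and compose correctly. First I would establish that $\colim$ is well-defined on objects: given a piecewise Boolean diagram $F \colon L \to \cat{Bool}$, the domain $L$ is a piecewise Boolean domain, so by Lemma~\ref{lem:functor} and Lemma~\ref{lem:colimit} the colimit $p_x \colon F(x) \to \coprod F(x)/\!\sim$ computes as described, and by Lemma~\ref{lem:colimit}(b) the colimit maps $p_x$ are injective; this makes $\colim(F)$ a genuine piecewise Boolean algebra. One subtlety: the cited lemmas are phrased for the \emph{specific} functor $F$ produced in Lemma~\ref{lem:functor}, whereas here $F$ is an arbitrary subobject-preserving functor. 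I would note that the proofs of Lemma~\ref{lem:colimit} and of the colimit description use only that $F$ preserves subobjects (so that the $F(x \leq y)$ are monomorphisms and direct images are downward closed), which holds by hypothesis, and that $L$ is a piecewise Boolean domain.

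Next I would treat morphisms. Given $(\varphi,\eta) \colon F \to F'$ with $\varphi \colon L \to L'$ and $\eta \colon F \Rightarrow F' \circ \varphi$, the family $p'_{\varphi(x)} \circ \eta_x \colon F(x) \to \colim(F')$ is a cocone over the diagram $F$: for $x \leq y$ in $L$ we have $\varphi(x) \leq \varphi(y)$, naturality of $\eta$ gives $\eta_y \circ F(x \leq y) = F'(\varphi(x) \leq \varphi(y)) \circ \eta_x$, and the colimit maps $p'$ satisfy $p'_{\varphi(y)} \circ F'(\varphi(x) \leq \varphi(y)) = p'_{\varphi(x)}$; combining these yields $(p'_{\varphi(y)} \circ \eta_y) \circ F(x \leq y) = p'_{\varphi(x)} \circ \eta_x$, as required. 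Hence the universal property of $\colim(F)$ in $\cat{PBool}$ supplies a unique morphism $\colim(\varphi,\eta) \colon \colim(F) \to \colim(F')$ with $\colim(\varphi,\eta) \circ p_x = p'_{\varphi(x)} \circ \eta_x$ for all $x$.

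Functoriality is then a routine diagram chase, which I would carry out by uniqueness. For identities, $(\id,\Id) \colon F \to F$ induces the cocone $p_x \circ \id = p_x$, and the identity on $\colim(F)$ satisfies this, so by uniqueness $\colim(\id,\Id) = \id$. For composition, given $(\varphi,\eta) \colon F \to F'$ and $(\psi,\theta) \colon F' \to F''$, with composite $(\psi\circ\varphi,\, \theta\varphi \cdot \eta)$, I would check that $\colim(\psi,\theta) \circ \colim(\varphi,\eta)$ fits the cocone defining $\colim(\psi\circ\varphi,\theta\varphi\cdot\eta)$: precomposing with $p_x$ gives $\colim(\psi,\theta) \circ p'_{\varphi(x)} \circ \eta_x = p''_{\psi(\varphi(x))} \circ \theta_{\varphi(x)} \circ \eta_x = p''_{(\psi\circ\varphi)(x)} \circ (\theta\varphi \cdot \eta)_x$, which is exactly the defining cocone, so uniqueness forces equality.

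The main obstacle is the first point: confirming that the explicit colimit description and the injectivity of the $p_x$ — proven in Lemmas~\ref{lem:functor} and~\ref{lem:colimit} only for the canonically constructed $F$ — go through for an arbitrary subobject-preserving functor out of a piecewise Boolean domain. I expect this to be genuine but not deep: one re-examines those proofs and observes that the only properties of $F$ used are preservation of subobjects (hence monicity of the transition maps and downward-closure of direct images) together with the fact that the transition maps between four-element Boolean algebras have been chosen coherently, which for an abstract functor $F$ is automatic from functoriality. Everything else is formal manipulation with colimiting cocones.
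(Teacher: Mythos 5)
Your proposal is correct and supplies exactly the routine verification that the paper leaves implicit (the proposition is stated with no proof). You also rightly flag the one point needing care --- that the explicit description $\coprod F(x)/\!\sim$ and the injectivity of the $p_x$ from Lemma~\ref{lem:colimit} carry over to an arbitrary subobject-preserving $F$ --- and your resolution is the intended one, since subobject preservation forces the transition maps $F(x\leq y)$ to be injective and that is all the cited proofs use.
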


\begin{theorem}\label{thm:equivalence}
  The functors $\PBoolD$ and $\colim$ form an equivalence between the category of piecewise Boolean algebras and the category of piecewise Boolean diagrams.
\end{theorem}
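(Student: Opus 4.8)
The strategy is the standard one for establishing a categorical equivalence: exhibit natural isomorphisms $\colim \circ \PBoolD \cong \Id_{\cat{PBool}}$ and $\PBoolD \circ \colim \cong \Id_{\cat{PBoolD}}$. First I would handle $\colim \circ \PBoolD \cong \Id_{\cat{PBool}}$. For $P \in \cat{PBool}$, the composite $\colim(\PBoolD(P))$ is the colimit of the diagram $\Sub(P) \to \cat{Bool}$ sending each Boolean subalgebra $B$ to itself; by Theorem~\ref{thm:colim} this colimit is $P$ itself, and the identity-on-$B$ cocone exhibits the comparison map $\colim(\PBoolD(P)) \to P$ as an isomorphism. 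Naturality in $P$ is then a diagram chase: given $f \colon P \to P'$, both routes around the square send the class $[b]_\sim$ of $b \in B \subseteq P$ to $f(b) \in P'$, because $\PBoolD(f)_B = f\restriction_B$ and the induced map on colimits is determined on each $F(x) = B$ by postcomposition with $f$.

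The more substantial direction is $\PBoolD \circ \colim \cong \Id_{\cat{PBoolD}}$. Given a piecewise Boolean diagram $F \colon L \to \cat{Bool}$, write $P = \colim(F)$ with colimit maps $p_x \colon F(x) \to P$. Applying $\PBoolD$ yields the diagram $\Sub(P) \to \cat{Bool}$, $B \mapsto B$. I need a natural isomorphism between $F$ and this composite, i.e. an isomorphism of posets $\varphi \colon L \to \Sub(P)$ together with a natural isomorphism $\eta \colon F \Rightarrow \PBoolD(\colim(F)) \circ \varphi$. The poset isomorphism is exactly the map $f(x) = p_x[F(x)]$ of Theorem~\ref{thm:characterisation}: that proof is carried out there for the specific $F$ produced by Lemma~\ref{lem:functor}, but the only properties used are that $F$ preserves subobjects and that the colimit maps $p_x$ are injective. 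The latter must be re-established for a general subobject-preserving $F$; this is precisely the content of Lemma~\ref{lem:colimit}(b), whose proof (via Lemma~\ref{lem:colimit}(a) and the explicit description of $\sim$) again uses only subobject-preservation of $F$, so it applies verbatim. With injectivity of $p_x$ in hand, $\eta_x \colon F(x) \to p_x[F(x)] = \varphi(x)$ is the corestriction of $p_x$, an isomorphism by injectivity, and naturality of $\eta$ in $x$ is immediate from $p_y \circ F(x \leq y) = p_x$. Then I check that $(\varphi, \eta)$ is natural in the diagram $F$, i.e. that it commutes with morphisms $(\psi, \theta)$ of piecewise Boolean diagrams; this unwinds to comparing the two induced maps on colimits on each component $F(x)$, where both are forced to agree by the universal property.

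The main obstacle is bookkeeping rather than conceptual: making sure that the proofs of Lemma~\ref{lem:colimit} and Theorem~\ref{thm:characterisation}, which are stated for the particular functor $F$ constructed in Lemma~\ref{lem:functor}, genuinely only use the hypothesis that $F$ preserves subobjects (plus that its domain is a piecewise Boolean domain), so that they transfer to an arbitrary object of $\cat{PBoolD}$. I would state this explicitly — that Lemma~\ref{lem:colimit} and the surjectivity/injectivity arguments of Theorem~\ref{thm:characterisation} hold for any subobject-preserving $F \colon L \to \cat{Bool}$ — and then the two natural isomorphisms assemble routinely, giving the equivalence. A secondary point requiring care is the coherence of the $|F(x)| = 4$ ambiguity (the choice in~\eqref{eq:orientationchoice}): but since we are now handed $F$ rather than constructing it, no such choice arises, and $\eta$ is simply the family of corestricted colimit maps with no adjustment needed.
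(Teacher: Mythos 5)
Your proposal follows essentially the same route as the paper: the direction $\colim\circ\PBoolD\cong\Id$ via Theorem~\ref{thm:colim} with the isomorphism $b\mapsto[b]_\sim$, and the direction $\PBoolD\circ\colim\cong\Id$ via $\varphi(x)=p_x[F(x)]$ and $\eta_x$ the corestricted colimit map, justified by Lemma~\ref{lem:colimit}. Your explicit observation that Lemma~\ref{lem:colimit} (stated for the $F$ of Lemma~\ref{lem:functor}) must be seen to apply to an arbitrary subobject-preserving $F$ --- which it does, since subobject-preservation already forces the transition maps to be injective --- is a point the paper passes over silently, and is worth making.
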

\begin{proof}
  If $P \in \cat{PBool}$, then $\colim(\PBoolD(P)) \cong P$ by Theorem~\ref{thm:colim}. 
  The isomorphism $P \cong \colim(\PBoolD(P))$ is given by $b \mapsto [b]_\sim$.
  If $f \colon P \to P'$, unrolling definitions shows that $\colim(\PBoolD(f))$ sends $[b]_\sim $ to $[f(b)]_\sim$. Therefore $\colim \circ \PBoolD$ is naturally isomorphic to the identity.

  For a %piecewise Boolean 
  diagram $F\colon L \to \cat{Bool}$, fix $P=\colim(F)$.
  Set $\varphi \colon L \to \Sub(P)$ by $x \mapsto p_x[F(x)]$, and
  $\eta_x = p_x \colon F(x) \to p_x[F(x)]$.
  This is a well-defined isomorphism $(\varphi,\eta) \colon F \to \PBoolD(\colim(F))$ by Lemma~\ref{lem:colimit}.
  If $(\psi,\varepsilon) \colon F \to F'$, then $(\psi',\varepsilon')= \PBoolD(\colim(\psi,\varepsilon))$ consists of $\psi' \colon \Sub(\colim(F)) \to \Sub(\colim(F'))$ given by $\psi'(B) = [ \bigcup_{b \in B \cap F(x)} \varepsilon_x(b) ]_\sim$, and $\varepsilon'_B \colon B \to [\varepsilon[B]]_\sim$ given by $\varepsilon'_B(b) = [\varepsilon_x(b)]_\sim$ when $b \in F(x)$. 
  It follows that
  \begin{align*}
    \psi' \circ \varphi(x) 
    = [\varepsilon_x[F(x)]]_\sim
    = \varphi' \circ \psi (x), \\
    (\eta' \psi \cdot \varepsilon)_x (b)
    = [ \varepsilon_x(b) ]_\sim
    = (\varepsilon' \varphi \cdot \eta)_x(b),
  \end{align*}
  whence $(\varphi',\eta') \circ (\psi,\varepsilon) = (\psi', \varepsilon') \circ (\varphi, \eta)$, and $\PBoolD \circ \colim$ is naturally isomorphic to the identity.
\end{proof}

\section{Orientation}\label{sec:orientation}

We have lifted the functor $\Sub$, that is full nor faithful, to an equivalence.
\[\begin{tikzpicture}[xscale=4,yscale=1.75]
  \node (tl) at (0,1) {$\cat{PBool}$};
  \node (tr) at (1,1) {$\cat{PBoolD}$};
  \node (br) at (1,0) {$\cat{Poset}$};
  \draw[->] (tl.south) to node[below] {$\Sub$} (br);
  \draw[->, transform canvas={yshift=1ex}] (tl) to node[above] {$\PBoolD$} (tr);
  \draw[->, transform canvas={yshift=-1ex}] (tr) to node[below] {$\colim$} (tl);
  \node at (.5,1) {$\simeq$};
  \draw[->, transform canvas={xshift=1ex}] (tr) to (br);
  \draw[->, dashed, transform canvas={xshift=-1ex}] (br) to (tr);
\end{tikzpicture}\]
However, the cost was to add the full structure sheaf to $\Sub(P)$.
In this section we reduce to minimal extra structure on a piecewise Boolean domain instead of the full structure sheaf.
In other words: we want to find a converse to the forgetful functor, dashed in the diagram above.
Lemma~\ref{lem:functor} goes towards such a functor, on the level of objects.
However, notice that its proof required making some arbitrary choices. 
We will now fix these choices to obtain a functor.
% First, we will fix the choice of the Boolean algebras $F(x)$. 
% A lattice is \emph{modulated} if for every modular element $y$ and $z \geq y$, there exists a modular element $x$ satisfying $x \vee z = 1$ and $x \wedge z = y$.
% A lattice is \emph{simple} when it has only trivial congruence relations.

% \begin{theorem}[Sachs~\cite{sachs:partitionlattices}]
% \label{thm:sachs}
%   Suppose $L$ has height $\geq 5$, is a simple modulated geometric lattice, and has a pair $(a,b)$ of elements that are not modular and satisfy $a \wedge b = 0$.
%   Then $L$ is isomorphic to the lattice of partitions of the set of modular coatoms in $L$. 
%   % Conversely, partition lattices of height $\geq 5$ satisfy these requirements.
%   \qed
% \end{theorem}

\begin{proposition}\label{prop:modularatoms}
  Let $L$ be a piecewise Boolean domain. If $x \in L$ is not an atom or 0, we may fix
  $F(x)$ to be the power set of the set of modular atoms in $\Sub(x)$
  in Lemma~\ref{lem:functor}(a).
\end{proposition}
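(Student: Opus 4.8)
The plan is to exploit the uniqueness half of the Gr\"atzer--Koh--Makkai and Sachs theorems that Lemma~\ref{lem:functor} already uses: the Boolean algebra produced there is determined up to isomorphism by $\Sub(x)$, so it is enough to show that this Boolean algebra $B$, characterised by $\Sub(B)\cong\Sub(x)$, is isomorphic to $\mathcal{P}(M)$ for $M$ the set of modular atoms of $\Sub(x)$. Transporting $M$ along $\Sub(x)\cong\Sub(B)$, this reduces to three statements about $B$: that $p\mapsto\langle p\rangle:=\{0,p,\neg p,1\}$ is a bijection from $\At(B)$ onto $M$, that $B$ is atomic, and that $B$ is complete --- the latter two giving $B\cong\mathcal{P}(\At(B))\cong\mathcal{P}(M)$ and hence a legitimate choice $F(x):=\mathcal{P}(M)$ in Lemma~\ref{lem:functor}(a).

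The heart of the matter is identifying the modular atoms of $\Sub(B)$. The atoms of $\Sub(B)$ are exactly the four-element subalgebras $\langle a\rangle$ for $a\in B\setminus\{0,1\}$, with $\langle a\rangle=\langle\neg a\rangle$; I would prove that $\langle a\rangle$ is modular precisely when $a$ or $\neg a$ is an atom of $B$. For the forward implication assume $a$ is an atom and verify the modular law $U\vee(\langle a\rangle\wedge V)=(U\vee\langle a\rangle)\wedge V$ for subalgebras $U\subseteq V$: it is immediate when $a\in V$ or $\neg a\in V$, and when $a,\neg a\notin V$ it says $\langle U\cup\{a\}\rangle\cap V=U$, which holds because every element of $\langle U\cup\{a\}\rangle$ has the form $(u_1\wedge a)\vee(u_2\wedge\neg a)$ with $u_1,u_2\in U$, and a short case analysis using that $a$ is an atom shows that any such element lying in $V$ already lies in $U$. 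For the converse, argue contrapositively: if neither $a$ nor $\neg a$ is an atom, pick a finite subalgebra $C\ni a$ in which $a$ and $\neg a$ are still non-atoms; then $C$ has at least four atoms, and under the standard isomorphism $\Sub(C)\cong\Pi_{|\At(C)|}\op$ the atom $\langle a\rangle$ corresponds to a coatom of a finite partition lattice both of whose blocks have size at least two. Such a coatom is not a modular element (the modular coatoms of $\Pi_n$ are exactly those with a singleton block, as one sees by exhibiting a pentagon $N_5$); since joins and meets in the downset $\Sub(C)$ agree with those in $\Sub(B)$, non-modularity of $\langle a\rangle$ persists in $\Sub(B)$.

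Granting this characterisation, $M=\{\langle p\rangle:p\in\At(B)\}$, and the hypothesis that $x$ is neither $0$ nor an atom enters here: it forces $\Sub(x)$ to contain a compact element of height at least two, so $B$ has at least three atoms, whence $\neg p$ is never an atom and $p\mapsto\langle p\rangle$ is injective; thus $M$ is in bijection with $\At(B)$. It then remains to see that $B$ is a complete atomic Boolean algebra. I expect this last step --- extracting atomicity and completeness of $B$ from the partition-lattice behaviour of the compact elements of $\Sub(x)$, rather than merely from the abstract existence of $B$ --- to be the main obstacle; it is automatic when $x$ is compact, since then $B$ is finite, and in general one would try to argue by writing $\Sub(x)$ as the directed union of the principal ideals of its compact elements and tracking atoms and suprema along the corresponding colimit of finite Boolean algebras.
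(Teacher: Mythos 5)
Your opening reduction---use the Gr\"atzer--Koh--Makkai/Sachs uniqueness theorem to replace the claim by ``the unique $B$ with $\Sub(B)\cong\Sub(x)$ is isomorphic to $\mathcal{P}(M)$''---is exactly the right move, and your identification of the modular atoms is the same key fact the paper relies on; only the execution differs. The paper quotes Sachs's lattice-theoretic characterisation of finite partition lattices for heights at least four and computes the modular coatoms of $\Pi_3$ and $\Pi_4$ by hand, whereas you prove directly on the algebra side that $\langle a\rangle$ is a modular element of $\Sub(B)$ iff $a$ or $\neg a$ is an atom of $B$. Your forward direction (the verification that $\langle U\cup\{a\}\rangle\cap V=U$ when $a$ is an atom with $a,\neg a\notin V$) is a self-contained argument the paper does not spell out, and your converse correctly transfers a non-modularity witness from a finite subalgebra $C$ (where it is precisely the paper's $12/34$ computation in $\Pi_4$) to $\Sub(B)$, using that $\Sub(C)$ is a principal downset of $\Sub(B)$ in which meets and joins agree. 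For compact $x$ this finishes the proof, since a finite Boolean algebra is the power set of its atoms and, for $x$ not an atom, no atom of $B$ is the complement of another.

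The step you flag as the remaining obstacle---atomicity and completeness of $B$ when $x$ is not compact---is not merely hard; it fails, so no colimit argument will close it. Take $P=B$ the countable atomless Boolean algebra and $x=B$ the top of $L=\Sub(B)$: then $x$ is neither $0$ nor an atom, $F(x)\cong B$ by uniqueness, and by your own characterisation $\Sub(x)$ has no modular atoms at all, so $\mathcal{P}(M)$ is trivial and certainly does not have subalgebra lattice $\Sub(B)$. (An atomic example also fails: for $B$ the finite--cofinite algebra on $\mathbb{N}$ one gets $M$ in bijection with $\mathbb{N}$, but $B\not\cong\mathcal{P}(\mathbb{N})$, so $\Sub(B)\not\cong\Sub(\mathcal{P}(\mathbb{N}))$ by uniqueness.) Note that the paper's own proof argues by the height of $x$ and therefore implicitly treats only compact $x$; the statement has to be read with that restriction. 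With that reading your argument is complete, and your instinct that the non-compact case is where the trouble lies was exactly right---the honest thing to do is to record the restriction rather than to try to fill the gap.
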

\begin{proof}
  If $x$ has at least four, it follows from a lattice-theoretic characterisation of partition lattices by Sachs~\cite[Theorem~14]{sachs:partitionlattices} that $\Sub(x)$ is dually isomorphic to the lattice of partitions of $\{ \mbox{modular coatoms in } \Sub(x)\op \}$.
  %So if we set $F(x)$ to be the powerset of this set, then indeed $\Sub(x) \cong \Sub(F(x))$.

  For $x$ of height two or three we may explicitly compute which coatoms of $\Pi_n$ are modular.
  Notice that the element $y=12/34$ is not modular in $\Pi_4$ (see Figure~\ref{fig:smallpartitionlattices}); taking $x=13/2/4$ and $z=13/24$ gives $x \vee (y \wedge z) = x \neq z = (x \vee y) \wedge z$. Similarly, $13/24$ and $14/23$ are not modular. But $123/4$, $124/3$, $134/2$, $234/1$ are modular elements. Hence $\Pi_4$ has 4 modular coatoms.
  Similarly, one can check that all 3 coatoms in $\Pi_3$ are modular.
  % So taking $F(x) = \mathcal{P} \{ \mbox{atoms modular in } \Sub(x)\}$ in fact gives $\Sub(x) \cong \Sub(F(x))$ as soon as $\height(x) \geq 2$.  
\end{proof}

% Clearly we may fix $F(0)=\{0,1\}$ in Lemma~\ref{lem:functor}(a).
% But atoms require a choice.

\begin{definition}\label{def:orientation}
  An \emph{orientation} of a piecewise Boolean domain $L$ consists of a pointed four-element Boolean algebra $b_a \in F(a)$ for each atom $a \in L$.
  %, an element $b_a \in F(a)$ for each non-maximal atom $a \in L$, and an element $b_a \in \{0,1\}$ for each maximal atom $a \in L$.
%
%  A spectral poset together with an orientation is called an \emph{oriented spectral poset}.
  A \emph{morphism} %$(L,F,b) \to (L',F',b')$ 
  \emph{of oriented piecewise Boolean domains} consists of a monotone function $\varphi \colon L \to L'$ satisfying
  \begin{itemize}
    \item if $a \in L$ is an atom, then either $\varphi(a)$ is an atom or $\varphi(a)=0$,
    \item if $a$ is a modular atom in $\Sub(x)$, then $\varphi(a)$ is modular in $\Sub(\varphi(x))$,
  \end{itemize}
  and a map $\eta_a \colon F(a) \to F'(\varphi(a))$
  satisfying $\eta_a(b_a) = b'_{\varphi(a)}$
  for atoms $a \in L$ for which $\varphi(a)$ is a nonmaximal atom.
  The resulting category is denoted $\cat{OPBoolD}$.
\end{definition}

\begin{proposition}
  The functor $\Sub \colon \cat{PBool} \to \cat{Poset}$ extends to orientations as follows. 
  On objects, the orientation is given by $F(B)=B$. The point $b_B$ is the unique element of $\At(C) \cap B$ for an atom $B$ covered by $C$, and $0$ if $B$ is maximal. 
  A morphism $\varphi=\Sub(f)$ extends to orientations by $\eta_B = f \restriction_B \colon B \to f[B]$.
\end{proposition}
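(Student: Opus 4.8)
The plan is to verify that the data described in the statement genuinely assembles into a functor $\cat{PBool} \to \cat{OPBoolD}$ that lifts $\Sub$, by checking (i) that the proposed object assignment produces a well-defined orientation, (ii) that the proposed morphism assignment lands in $\cat{OPBoolD}$, and (iii) that functoriality holds and the forgetful functor $\cat{OPBoolD} \to \cat{PBool}$ composed with this extension recovers $\Sub$. Throughout I will lean on Proposition~\ref{prop:modularatoms}, which tells us that for a non-atom, non-zero element $x$ the chosen model $F(x)$ is the power set of the modular atoms of $\Sub(x)$; for an atom $B \in \Sub(P)$ (a four-element Boolean subalgebra, i.e.\ one generated by a single nontrivial element together with its complement) we keep $F(B) = B$ itself.

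First I would check well-definedness of the orientation on objects. An atom $B$ of $\Sub(P)$ is a four-element Boolean algebra, so it has exactly two atoms of its own, namely a nontrivial element $b$ and its complement $\neg b$. If $B$ is nonmaximal, pick any $C \in \Sub(P)$ covering $B$; the claim is that $\At(C) \cap B$ is a singleton. Indeed $C$ has height two, so by the computation of $\Pi_3$ in Figure~\ref{fig:smallpartitionlattices}, $\Sub(C)$ is dual to $\Pi_3$, which has exactly three atoms; these three atoms are precisely the four-element subalgebras of $C$, and the two atoms of $C$ generate two distinct such subalgebras. Since $B$ is one of these three four-element subalgebras and $B \subsetneq C$, exactly one atom of $C$ lies in $B$ (if both did, $B$ would contain all of $C$). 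This must be checked to be independent of the choice of $C$: if $C'$ also covers $B$, then the atom $c$ of $C$ in $B$ and the atom $c'$ of $C'$ in $B$ both lie in the two-element set $\At(B)$, and one argues they coincide --- for instance because $B = \{0, c, \neg c, 1\} = \{0, c', \neg c', 1\}$ forces $\{c,\neg c\} = \{c', \neg c'\}$, and the distinguished role is pinned down by which generator of $B$ is an \emph{atom} of a covering block, which does not depend on the block. If $B$ is maximal we set $b_B = 0$, which is vacuously fine since the orientation condition on morphisms only constrains atoms mapping to nonmaximal atoms.

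Second I would verify that $\eta_B = f\restriction_B \colon B \to f[B]$ gives a legitimate morphism in $\cat{OPBoolD}$. That $\varphi = \Sub(f)$ is monotone is already known. For an atom $B$ of $\Sub(P)$, $\varphi(B) = f[B]$ is a homomorphic image of a four-element Boolean algebra, hence has two or one elements' worth of collapse --- concretely it is either four-element (an atom of $\Sub(P')$) or the trivial algebra (the bottom $0$ of $\Sub(P')$), matching the first bullet of Definition~\ref{def:orientation}. The modular-atom-preservation bullet follows from the description in Proposition~\ref{prop:modularatoms}: a modular atom of $\Sub(x)$ corresponds, under the Sachs duality, to a specific combinatorial feature (a "coatom of partition type $n-1,1$" in $\Sub(x)\op$) which is preserved by the direct-image action of a Boolean homomorphism --- here I would cite the relevant structure-preservation of $\Sub$ on morphisms, or argue directly that $f$ restricted to $x$ and then to the subalgebra generated by a modular atom is still injective enough. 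Finally the orientation compatibility $\eta_B(b_B) = b'_{\varphi(B)}$ when $\varphi(B)$ is a nonmaximal atom: $b_B$ is the atom of some block $C \supseteq B$ lying in $B$; applying $f$ and using that $f[C]$ covers $f[B]$ (this needs that $f$ does not collapse the covering relation, which holds precisely because $f[B]$ is assumed nonmaximal and still an atom) shows $f(b_B) \in \At(f[C]) \cap f[B]$, which by the object-level definition is exactly $b'_{f[B]}$.

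Functoriality is then routine: $\Sub$ is already a functor on the underlying posets, and $\eta$ is given by honest restriction of functions, so $(g \circ f)\restriction_B = g\restriction_{f[B]} \circ f\restriction_B$ and identities go to identities; that the composite in $\cat{OPBoolD}$ uses the whiskered composition $\theta\varphi \cdot \eta$ is matched on the nose by this equation. I expect the main obstacle to be step two's modular-atom clause: one must be sure that the direct image under an arbitrary piecewise Boolean morphism $f$ sends modular atoms of $\Sub(x)$ to modular atoms of $\Sub(f[x])$ (when the image is not collapsed), which is a genuine lattice-theoretic fact about how $\Sub$ interacts with surjections of finite Boolean algebras rather than a formality --- the cleanest route is probably to reduce to the case where $f\restriction_x$ is surjective, identify $\Sub(x)\op$ and $\Sub(f[x])\op$ as partition lattices on the atom sets, and observe that a surjection of finite Boolean algebras is dual to a surjection of finite sets, under which the induced map of partition lattices preserves partitions of type $(n-1,1)$, i.e.\ modular coatoms.
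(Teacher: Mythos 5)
Your outline matches the paper's: check well-definedness of $b_B$ on objects, check the three morphism conditions, then functoriality. But two steps do not go through as written. First, your argument that $b_B$ is independent of the covering block $C$ is circular: you conclude that ``the distinguished role is pinned down by which generator of $B$ is an atom of a covering block, which does not depend on the block'' --- but that independence is exactly the claim at issue; knowing $\{c,\neg c\}=\{c',\neg c'\}$ only tells you $c'\in\{c,\neg c\}$. (The paper asserts this independence without argument as well, so you are in good company, but your text presents a non-proof as a proof.)

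The more substantive gap is the modular-atom clause, which you yourself flag as the main obstacle. Your proposed route --- identify $\Sub(x)\op$ and $\Sub(f[x])\op$ as finite partition lattices via the Sachs/Gr\"atzer--Koh--Makkai description and track coatoms of type $(n-1,1)$ --- only makes sense when $x$ is a \emph{finite} (compact) element of $\Sub(P)$. Definition~\ref{def:orientation} quantifies over all $x$, including infinite Boolean subalgebras, whose subalgebra lattices are algebraic but not partition lattices; you would need an additional reduction of modularity in $\Sub(f[D])$ to the compact case, which is not a formality. The paper avoids this entirely with a direct computation valid for arbitrary $D$: using $f[A\vee C]=f[A]\vee f[C]$, reducing to the case $B\cap C=\{0,1\}$, and chasing the inclusion $(f[A]\vee f[B])\cap f[C]\supseteq f[(A\vee B)\cap C]=f[A\vee(B\cap C)]=f[A]$ together with its converse from $A\subseteq C$. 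Also note a small slip in your sketch: a surjection of finite Boolean algebras $2^S\twoheadrightarrow 2^T$ is dual to an \emph{injection} $T\hookrightarrow S$, not a surjection of sets; the restriction of a type-$(n-1,1)$ partition along such an injection is then either again of type $(m-1,1)$ or the trivial partition (so $\varphi(a)=0$, which is vacuously modular), and both cases must be acknowledged.
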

\begin{proof}
  First of all, notice that this is well-defined on objects. If $B \in \At(\Sub(P))$ is covered by $C \in \Sub(P)$, say $B=\{0,x,\neg x, 1\}$ for $x \in P$, then precisely one of $x$ and $\neg x$ must be an atom in $C$ (and the other one a coatom). Also, this does not depend on $C$.

  We have to show it is also well-defined on morphisms $f \colon P \to P'$. 
  If $B$ is an atom, say $B=\{0,x,\neg x, 1\}$, then $\varphi(B)=f[B]=\{0,f(x),\neg f(x),1\}$ is clearly either an atom or $\{0,1\}$.
  If $f[B]$ is a nonmaximal atom, then $f[C]$ covers $f[B]$ for some $C \in \Sub(P)$ covering $B$, so $f(b_B)=b'_{f[B]}$ by construction.
  Now suppose $B$ is modular in $\Sub(D)$. Let $A' \subseteq C' \in \Sub(f[D])$; then $A'=f[A]$ and $C'=f[C]$ for some $A,C \in \Sub(D)$, namely $A=f^{-1}(A') \cap D$. 
  Since $A \vee C$ is generated by $A \cup C$, we have $f[A \vee C] = f[A] \vee f[C]$ by~\cite[Proposition~2.4.4]{koppelberg:booleanalgebras}. 
  We may assume $B \cap C=\{0,1\}$, for if $B \subseteq C$ then $f[C] \subseteq f[A] \vee f[B] = f[A \vee B]$ and $f[B]$ is modular in $\Sub(f[D])$.
  Of course always $f[B \cap C] \subseteq f[B] \cap f[C]$. Hence $A' \vee (f[B] \cap C') = f[A \vee B] \cap f[C] \supseteq f[(A \vee B) \cap C] = f[A \vee (B \cap C)] = f[A] \vee f[B \cap C] = f[A]$. Because $A \subseteq C$, the reverse inclusion also holds, and $f[B]$ is modular in $\Sub(f[D])$.

  Finally, this extension is clearly functorial.
\end{proof}

It follows that the forgetful functor $\cat{PBoolD} \to \cat{Poset}$ also extends to orientations as a functor $\cat{PBoolD} \to \cat{OPBoolD}$.

\begin{lemma}\label{lem:orienteddiagram}
  An oriented piecewise Boolean domain $(L,F,b)$ extends uniquely to a piecewise Boolean diagram $F \colon L \to \cat{Bool}$ where $F(a \leq x)(b_a)$ is an atom if $x$ covers an atom $a \in L$.
\end{lemma}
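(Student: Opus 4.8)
The statement asks us to reconstruct a full piecewise Boolean diagram $F \colon L \to \cat{Bool}$ from the orientation data $(L, F, b)$, and to show this extension is unique once we insist that $F(a \leq x)(b_a)$ be an atom whenever $x$ covers an atom $a$. The skeleton of the argument is already present in the proof of Lemma~\ref{lem:functor}(b): there we built the morphisms $F(x \leq y)$, and the only freedom was the choice of $F(a < y)$ for atoms $a$ with $|F(a)|=4$. So the plan is to replay that construction, but now with the arbitrary choices pinned down by the orientation. First I would recall from Lemma~\ref{lem:functor}(a) and Proposition~\ref{prop:modularatoms} that $F(x)$ is determined on objects (as a power set of modular atoms, or as prescribed by the orientation on atoms). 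Then, for $x \leq y$ with $|F(x)| \neq 4$, the morphism $F(x \leq y)$ inducing $\varphi_{x,y}$ is unique by~\cite[Lemma~5]{graetzerkohmakkai:booleansubalgebras}, so there is no choice to make and functoriality among such maps is automatic, exactly as in Lemma~\ref{lem:functor}.

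The substantive step is the case $|F(a)|=4$, i.e.\ $a$ an atom. If $a$ is maximal there is nothing to fix. If $a$ is not maximal, the constraint ``$F(a \leq x)(b_a)$ is an atom when $x$ covers $a$'' forces a \emph{specific} one of the two homomorphisms $F(a) \to F(x)$ inducing $\varphi_{a,x}$: the four-element algebra $F(a)$ has exactly one nontrivial pair $\{b_a, \neg b_a\}$, and a monomorphism into $F(x)$ is determined by where it sends $b_a$; among the two maps realizing $\varphi_{a,x}$, one sends $b_a$ to an atom of $F(x)$ and the other sends it to a coatom (this is the same observation used in the proof of the previous proposition, that precisely one of $x, \neg x$ is an atom in a cover). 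This gives a well-defined $F(a < x)$ for every $x$ covering $a$. For a general $y > a$ (not necessarily a cover), I would pick a cover $x$ of $a$ with $x \leq y$ — one exists by algebraicity, property~(3) — and set $F(a \leq y) = F(x \leq y) \circ F(a < x)$; I must then check this is independent of the chosen intermediate cover $x$. For that, if $x, x'$ are two covers of $a$ below $y$, then $x \vee x'$ exists (it lies below $y$), and the diagram~\eqref{eq:orientationchoice} argument from Lemma~\ref{lem:functor}(b) applies verbatim: $F(a < x)$ and $F(a < x')$ are both compatible with $F(a < x\vee x')$ because all three send $b_a$ to an atom and there is a \emph{unique} such map $F(a) \to F(x \vee x')$ over $\varphi_{a, x \vee x'}$, so the two composites $F(x \leq y)\circ F(a<x)$ and $F(x' \leq y)\circ F(a<x')$ agree.

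With $F$ defined on all comparabilities, functoriality $F(y \leq z)\circ F(x \leq y) = F(x \leq z)$ is checked by cases on whether $x$ is an atom. If $|F(x)|\neq 4$ it is the uniqueness clause of~\cite[Lemma~5]{graetzerkohmakkai:booleansubalgebras}; if $x$ is a nonmaximal atom, both sides send $b_x$ to an atom of $F(z)$ (the right side by definition-via-a-cover, the left side because $F(x \leq y)$ sends $b_x$ to an atom of $F(y)$ and $F(y \leq z)$, being a Boolean embedding, sends atoms that generate $\Sub$-images correctly — more carefully, pick a cover of $x$ below $y$ and use the independence just proved) and both induce $\varphi_{x,z}$, so they coincide by the uniqueness of the atom-preserving lift; the maximal-atom and $x=0$ cases are trivial. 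That $\Sub \circ F \cong \Sub$ naturally holds by construction, so $F$ is a genuine piecewise Boolean diagram. Uniqueness: any extension meeting the atom condition must agree with ours on objects (forced by Lemma~\ref{lem:functor}(a)) and on each $F(x \leq y)$ — for non-atom $x$ by uniqueness of the lift, and for nonmaximal atoms $x$ precisely because the atom condition singles out one of the two candidate maps on covers, hence on all $y > x$ by the factorization through a cover. The main obstacle, as in Lemma~\ref{lem:functor}, is the coherence check for atoms: verifying that the choice forced on covers propagates consistently to all $y > a$ and is compatible across different covers; this is exactly where diagram~\eqref{eq:orientationchoice} and the uniqueness of an atom-preserving embedding into a join do the work, and I would lean on that lemma's argument rather than redo it.
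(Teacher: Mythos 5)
Your overall strategy --- replay the construction of Lemma~\ref{lem:functor}(b), use the orientation to pin down the one residual choice at atoms, and invoke diagram~\eqref{eq:orientationchoice} for coherence --- is the same as the paper's. The paper's proof is exactly this, compressed: pick one cover $y$ of each atom $a$, let $F(a<y)$ send $b_a$ to an atom of $F(y)$, and let the diagram propagate that single choice to all other $y'>a$.

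The gap is in your coherence step for two covers $x,x'$ of $a$ below $y$. You assert that $F(x\leq x\vee x')\circ F(a<x)$ and $F(x'\leq x\vee x')\circ F(a<x')$ agree ``because all three send $b_a$ to an atom and there is a unique such map over $\varphi_{a,x\vee x'}$''. This does not hold up: a Boolean embedding need not send atoms to atoms, so the composites have no reason to send $b_a$ to an atom of $F(x\vee x')$ (which in general does not cover $a$), and there is no uniqueness of an ``atom-preserving'' lift into $F(x\vee x')$ to appeal to. Worse, the compatibility you want can genuinely fail. Take $L=\Sub(2^{\{1,2,3,4\}})$ and the atom $a=\{0,\{1,2\},\{3,4\},1\}$. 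Its two covers are the eight-element subalgebras with atom sets $\{1\},\{2\},\{3,4\}$ and $\{1,2\},\{3\},\{4\}$; in the first only $\{3,4\}$ is an atom of the cover, in the second only $\{1,2\}$ is. Functoriality into $F(x\vee x')\cong 2^4$ forces $F(a\leq x)(b_a)$ and $F(a\leq x')(b_a)$ to correspond to the same element of $\{\{1,2\},\{3,4\}\}$, and that element cannot be an atom of both covers. Hence your recipe ``define $F(a\leq y)$ by factoring through any cover of $a$ below $y$'' is not well defined, and no functor satisfies the atom condition at both covers of this $a$ simultaneously. The paper sidesteps the well-definedness problem by anchoring each atom at a single chosen cover and letting \eqref{eq:orientationchoice} determine the rest; the cost is that the displayed atom condition should be read as holding at that chosen cover (or be restricted, e.g.\ to modular atoms, cf.\ Proposition~\ref{prop:modularatoms}), not quantified over all covers. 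You should restructure your argument the same way rather than defining $F(a\leq y)$ through an arbitrary cover.
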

\begin{proof}
  It suffices to show that the requirement in the statement fixes the choice of maps $F(a \leq y)$ for atoms $a \in L$ in Lemma~\ref{lem:functor}(b) in a well-defined way. Pick any $y$ covering $a$, and fix $F(a<y)$ to be the map that sends $b_a$ to an atom in $F(y)$. By diagram~\eqref{eq:orientationchoice}, then $F(a<y')$ maps $b_a$ to an atom for any $y'>a$ for which $z=y \vee y'$ exists (because Theorem~\ref{thm:equivalence} lets us assume that $F=\PBoolD(P)$ for some piecewise Boolean algebra $P$). Hence $F(a<y)$ does not depend on the choice of $y$.
\end{proof}

\begin{lemma}\label{lem:orientedmorphism}
  A morphism of oriented piecewise Boolean domains extends uniquely to a morphism of piecewise Boolean diagrams.
\end{lemma}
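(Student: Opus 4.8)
The plan is to follow the pattern of Lemma~\ref{lem:orienteddiagram}, reducing everything through the equivalence of Theorem~\ref{thm:equivalence}. Let $(\varphi,(\eta_a)_a)\colon(L,F,b)\to(L',F',b')$ be a morphism of oriented piecewise Boolean domains, and let $F\colon L\to\cat{Bool}$ and $F'\colon L'\to\cat{Bool}$ be the piecewise Boolean diagrams these oriented domains determine by Lemma~\ref{lem:orienteddiagram}. By Theorem~\ref{thm:equivalence} we may take $F=\PBoolD(P)$ and $F'=\PBoolD(P')$ for $P=\colim(F)$ and $P'=\colim(F')$; under this identification $F(x)$ is the Boolean subalgebra $x\subseteq P$, the structure maps $F(x\leq y)$ are inclusions, and the colimit maps $p_x\colon F(x)\to P$ are inclusions (Lemma~\ref{lem:colimit}). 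Extending $(\varphi,(\eta_a)_a)$ to a morphism of piecewise Boolean diagrams is then, via the equivalence, the same as producing a morphism $f\colon P\to P'$ in $\cat{PBool}$ with $\Sub(f)=\varphi$ and $f\restriction_a=\eta_a$ for every atom $a$ of $\Sub(P)$, so it suffices to show that such an $f$ exists and is unique.

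Uniqueness is immediate: for $c\in P$ with $c\notin\{0,1\}$ the four-element subalgebra $\langle c\rangle=\{0,c,\neg c,1\}$ is an atom of $\Sub(P)$, so $f(c)=f\restriction_{\langle c\rangle}(c)=\eta_{\langle c\rangle}(c)$ is forced; and $f(0)=0$, $f(1)=1$ since morphisms preserve constants. So there is at most one candidate, and this also dictates the construction: define $f\colon P\to P'$ by $f(c)=\eta_{\langle c\rangle}(c)$ for $c\notin\{0,1\}$, and $f(0)=0$, $f(1)=1$.

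For existence we must check that this $f$ is a morphism of piecewise Boolean algebras and that $\Sub(f)=\varphi$. Since any commeasurable pair of $P$ generates a finite Boolean subalgebra, it suffices to show that $f$ restricts to a Boolean homomorphism on every finite subalgebra $y\subseteq P$; indeed, given $c\odot c'$, working inside $y=\langle c,c'\rangle$ then shows $f(c)\odot f(c')$ and that $f$ respects $\wedge,\vee$, while $f(\neg c)=\neg f(c)$ holds because $\eta_{\langle c\rangle}$ is a Boolean homomorphism. Fix such a $y$ with atoms $p_1,\dots,p_n$. Monotonicity of $\varphi$ gives $\varphi(\langle p_i\rangle)\subseteq\varphi(y)$, so each $f(p_i)=\eta_{\langle p_i\rangle}(p_i)$ lies in the single subalgebra $\varphi(y)\subseteq P'$; hence the $f(p_i)$ are pairwise commeasurable and generate a Boolean subalgebra of $\varphi(y)$. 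It remains to see that $\{f(p_i)\}$ is a partition of $1_{P'}$, i.e.\ that $f(p_i)\wedge f(p_j)=0$ for $i\neq j$ and $\bigvee_i f(p_i)=1$; granting this, $f\restriction_y$ is the Boolean homomorphism determined by $p_i\mapsto f(p_i)$, and $\Sub(f\restriction_y)=\varphi\restriction_{\Sub(y)}$ then follows since both sides agree on the atoms $\langle p_i\rangle$ of $\Sub(y)$ and $\Sub(y)$ is atomistic, giving $\Sub(f)=\varphi$ and $f\restriction_a=\eta_a$ by construction.

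The partition property is the main obstacle, because the maps $\eta_a$ are chosen on each four-element subalgebra independently, and nothing in the definition of a morphism of oriented piecewise Boolean domains directly forces them to fit together on a larger $y$. The tool, exactly as in Lemma~\ref{lem:orienteddiagram}, is rigidity: since $F(y)=y$ is not the four-element Boolean algebra, a homomorphism out of it is pinned down by its action on subalgebras~\cite[Lemma~5]{graetzerkohmakkai:booleansubalgebras}, so the only possible homomorphism $y\to\varphi(y)$ inducing $\varphi\restriction_{\Sub(y)}$ is the one whose restriction to each $\langle p_i\rangle$ realizes $\varphi$; the coherence square~\eqref{eq:orientationchoice} then propagates the choice made at one atom $\langle p_i\rangle$ to all others lying below $y$, using that $y=\bigvee_i\langle p_i\rangle$, and the orientation conditions $\eta_a(b_a)=b'_{\varphi(a)}$ guarantee that the propagated choices are the given $\eta_{\langle p_i\rangle}$. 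This shows the intended $\eta_{\langle p_i\rangle}(p_i)$ are precisely the images of the atoms under an honest homomorphism, whence the partition property. Finally, uniqueness of the extension makes the assignment compatible with composition, so this construction is functorial.
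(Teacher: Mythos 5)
Your setup---passing to $P=\colim F$, defining $f(c)=\eta_{\langle c\rangle}(c)$, and reading off uniqueness from the fact that every nontrivial element of $P$ generates an atom of $\Sub(P)$---is sound, and is essentially the paper's own construction of $\eta_x$ in different clothing; the reduction of the homomorphism property to ``the atoms of each finite subalgebra $y$ go to a partition of unity'' is also workable once applied to \emph{all} finite subalgebras (that also yields $f(c)=\bigvee_{p_i\le c}f(p_i)$, a point you assert but do not derive). The genuine gap is that you never prove the partition property. The appeal to rigidity and to diagram~\eqref{eq:orientationchoice} does not engage with the actual difficulty: \cite[Lemma~5]{graetzerkohmakkai:booleansubalgebras} gives \emph{uniqueness} of a homomorphism inducing a given map of subalgebra lattices, not existence, and nothing in your argument produces a homomorphism $y\to\varphi(y)$ inducing $\varphi\restriction_{\Sub(y)}$ whose restrictions to the $\langle p_i\rangle$ are the given $\eta_{\langle p_i\rangle}$ --- that existence is precisely the statement to be proved. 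Likewise, square~\eqref{eq:orientationchoice} constrains the two maps out of a \emph{single} atom into two of its covers; it says nothing about how the independently chosen $\eta_a$ and $\eta_{a'}$ at two \emph{incomparable} atoms $a,a'\le x$ fit together, and ``propagating the choice from one atom to the others below $y$'' has no mechanism behind it, since the only element relating those atoms is $y$ itself, which is exactly where coherence must be checked rather than assumed.

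The paper closes this gap by a concrete computation: by naturality one reduces to $x=a\vee a'$ of height $2$, so that $F(x)$ has eight elements, and then verifies $\eta_x(b\wedge b')=\eta_x(b)\wedge\eta_x(b')$ in the four cases $b=b_a$ or not, $b'=b_{a'}$ or not. This case split is where the orientation condition $\eta_a(b_a)=b'_{\varphi(a)}$ actually does its work, by controlling which elements of $F'(\varphi(x))$ are atoms and which are coatoms. Your argument never uses the orientation condition in any computation, which is a symptom of the missing content; note also that when $\varphi(a)$ is a maximal atom or $0$ the orientation imposes no constraint on $\eta_a$ at all, so any complete proof must say explicitly why coherence survives in that case.
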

\begin{proof}
  We have to extend a map $\eta_a \colon F(a) \to F'(\varphi(a))$, that is only defined on atoms $a \in L$, to a natural transformation $\eta_x \colon F(x) \to F'(\varphi(x))$.
  Let $x \in L$ be nonzero, and let $b' \in F(x)$. Then there is an atom $a \leq x$ and an element $b \in F(a)$ such that $F(a \leq x)(b)=b'$. Define $\eta_x(b) = F'(\varphi(a) \leq \varphi(x))(b')$. Because $a$ and $b$ are unique unless $b' \in \{0,1\}$, this is a well-defined function. Moreover, it is natural by construction. Therefore it is also automatically unique.

  We have to show $\eta_x$ is a homomorphism of Boolean algebras. It clearly preserves $0$ and $\neg$, so it suffices to show that it preserves $\wedge$. Let $b \neq b' \in F(x)$, say $b \in F(a)$ and $b' \in F(a')$ for atoms $a,a' \leq x$. By naturality, we may assume that $x=a \vee a'$. Hence $x$ and $\varphi(x)$ have height 2, and $F(x)$ and $F'(\varphi(x))$ have 8 elements.
  We can now distinguish four cases, depending on whether $b=b_a$ and $b'=b_{a'}$ or not.
  In each case it is easy to see that $\eta_x(b \wedge b') = \eta_x(b) \wedge \eta_x(b')$.
  For example, if $b=b_a$ and $b'=b_{a'}$, then they are distinct atoms in $F(x)$, so $b \wedge b'=0$. But $\eta_x(b)=b'_a$ and $\eta_x(b')=b'_{a'}$ are distinct atoms in $F(\varphi(x))$, so $\eta_x(b) \wedge \eta_x(b') = 0$, too. 
\end{proof}

It follows that morphisms of oriented piecewise Boolean domains preserve directed suprema.

\begin{theorem}
  There is a functor $\cat{OPBoolD} \to \cat{PBoolD}$ that, together with the forgetful functor, forms an isomorphism of categories.
\end{theorem}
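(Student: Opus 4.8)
The plan is to take as the functor $\cat{OPBoolD} \to \cat{PBoolD}$ exactly the assignment already produced by Lemmas~\ref{lem:orienteddiagram} and~\ref{lem:orientedmorphism}: send an oriented piecewise Boolean domain $(L,F,b)$ to its unique extension $\bar F\colon L\to\cat{Bool}$ for which $\bar F(a\leq x)(b_a)$ is an atom whenever $x$ covers an atom $a\in L$, and send a morphism $(\varphi,\eta)$ of oriented piecewise Boolean domains to its unique extension $(\varphi,\bar\eta)$ to a morphism of piecewise Boolean diagrams. Write $S$ for this assignment and $U\colon\cat{PBoolD}\to\cat{OPBoolD}$ for the forgetful functor of the preceding paragraph. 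The task then splits into showing $S$ is a functor and that $S$ and $U$ are mutually inverse strictly, not merely up to isomorphism.

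Functoriality of $S$ is immediate from the uniqueness clause of Lemma~\ref{lem:orientedmorphism}. The identity natural transformation is an extension of the identity morphism of oriented domains, so $S(\id)=\id$; and given composable morphisms, the composite in $\cat{PBoolD}$ (in the sense of Definition~\ref{def:spectraldiagram}) of their extensions has the composite monotone map as its poset part and, since $\bar\eta$ agrees with $\eta$ on atoms by construction, restricts on atoms to the composite of the oriented morphisms, hence by uniqueness equals $S$ of the composite. For $U\circ S=\id_{\cat{OPBoolD}}$: by construction of $S$ the orientation point $b_a$ is the unique element of $F(a)$ whose image under $\bar F(a\leq x)$ is an atom, which is precisely what $U$ reads off again, while $L$ and the atom-level algebras are untouched, and on morphisms $\bar\eta$ restricts to $\eta$. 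For $S\circ U=\id_{\cat{PBoolD}}$: a piecewise Boolean diagram $G$ extends its own underlying oriented domain $U(G)$ and, by the definition of the orientation points of $U(G)$, satisfies the atom-preservation condition, so Lemma~\ref{lem:orienteddiagram} forces $S(U(G))=G$; similarly a morphism of diagrams is the unique extension of its restriction, by Lemma~\ref{lem:orientedmorphism}.

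The one step requiring genuine, if elementary, care — and the step I expect to be the main obstacle — is verifying that the restriction $U(\psi,\varepsilon)$ of an arbitrary morphism of piecewise Boolean diagrams is indeed a morphism of oriented piecewise Boolean domains: that $\psi$ obeys the atom and modular-atom conditions of Definition~\ref{def:orientation}, and that $\varepsilon$ sends orientation points to orientation points. The first follows from $\Sub(\psi(x))=\psi[\Sub(x)]$ together with the explicit description of the modular atoms of $\Pi_2,\Pi_3,\Pi_4$ already used in Proposition~\ref{prop:modularatoms}; the second follows from naturality of $\varepsilon$ at a cover $a\leq x$ chosen so that $\psi(x)$ covers $\psi(a)$, and one may invoke Theorem~\ref{thm:equivalence} to reduce to the case $G=\PBoolD(P)$ exactly as in the proof of Lemma~\ref{lem:orienteddiagram}. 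Once this bookkeeping is discharged, the two uniqueness statements make $S$ and $U$ inverse to one another, which is the claimed isomorphism of categories.
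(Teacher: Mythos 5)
Your proposal is correct and follows essentially the same route as the paper: take the extension functor given by Lemmas~\ref{lem:orienteddiagram} and~\ref{lem:orientedmorphism}, get functoriality and both strict identities from the uniqueness clauses, and handle the well-definedness of the forgetful functor (atom/modular-atom preservation and preservation of orientation points) separately — which the paper does in the proposition extending $\Sub$ to orientations, just before these lemmas. Your write-up merely spells out the bookkeeping that the paper compresses into ``by construction'' and ``by unicity''.
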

\begin{proof}
  Lemmas~\ref{lem:orienteddiagram} and~\ref{lem:orientedmorphism} define the functor on objects and morphisms, respectively; it is functorial by construction.
  Extending an oriented piecewise Boolean domain to a piecewise Boolean diagram and then restricting again to an oriented piecewise Boolean domain leads back to the original.
  Conversely, starting with a piecewise Boolean diagram, restricting it to an oriented piecewise Boolean domain, and then extending, leads back to the original piecewise Boolean diagram by unicity.
  Hence this is an isomorphism of categories.
\end{proof}

% We conclude that the category $\cat{PBool}$ of piecewise Boolean algebras is equivalent to the category $\cat{OPBoolD}$ of oriented spectral posets.

\section{Future work}\label{sec:future}

We conclude by listing several directions for future research.

\begin{itemize}
  \item Many examples of piecewise Boolean algebras come from \emph{orthomodular lattices}~\cite{kalmbach:orthomodularlattices,vdbergheunen:colim}. These are precisely the piecewise Boolean algebras that are \emph{transitive} and \emph{joined}: the union $\leq$ of the orders on each commeasurable subalgebra is a transitive relation, and every two elements have a least upper bound with respect to $\leq$~\cite[1.4.22]{kalmbach:orthomodularlattices}; see also~\cite{finch:structure,gudder:partial}. An isomorphism of piecewise Boolean algebras between orthomodular lattices is in fact an isomorphism of orthomodular lattices.\footnote{This was observed in Sarah Cannon's MSc thesis~\cite{cannon:spectrum}, which prompted this work.}  Reformulating these properties in terms of piecewise Boolean domains would  extend our results to orthomodular lattices.

  \item The introduction discussed the analogy between piecewise Boolean diagrams on a piecewise Boolean domains and structure sheaves on a Zariski spectrum. The latter form a topos and hence come with an internal logic~\cite{heunenlandsmanspitters:topos}. However, piecewise Boolean domains are not (pointless) topological spaces. Can we formalise a notion of ``skew sheaf'' over piecewise Boolean domains so that it still makes sense to perform logic in the resulting ``skew topos''?

  \item An obvious question is whether our results extend to piecewise \emph{complete} Boolean algebras.

  \item Although there are many characterisations of finite partition lattices, there is no known equivalence between the category of finite partition lattices and the category of finite sets. For concreteness' sake, it would be very satisfying to explicate the maps $\varphi_{x,y}$ in Lemma~\ref{lem:functor}.
  %Proposition~\ref{prop:modularatoms}.
  % (Notice that $a$ modular in $\Sub(x)$ does not imply $a$ modular in $\Sub(y)$, so they cannot be simple inclusions.)

  \item Any C*-algebra $A$ gives rise to a piecewise Boolean algebra $P$. In fact, $\Sub(P)$ determines $A$ up to isomorphism of Jordan algebras~\cite{doeringharding:jordan,hamhalter:pseudojordan,heunen:cccc}. Can our results be used to give an equivalent description of Jordan C*-algebras?
\end{itemize}

\appendix

% \setlength\bibsep{0pt}
% \setlength\bibindent{0pt}
% \setlength\bibhang{-10pt}
% \def\bibfont{\footnotesize}
% %\def\bibfont{\small}
% % \bibliographystyle{plain}
\bibliographystyle{plain}
\bibliography{piecewisedomains}

\end{document}